\documentclass[runningheads,envcountsame]{llncs}

% -------------------------------------------------------------------
\usepackage{graphicx}
\usepackage{xcolor}
\usepackage{rotating}
%\usepackage{caption}

%%%% Added packages 
\usepackage[utf8]{inputenc}
\usepackage{amsmath}
\usepackage{amssymb}
\let\doendproof\endproof
\renewcommand\endproof{~\hfill\qed\doendproof}

\usepackage{subfig}
\usepackage{enumerate}
\usepackage{algorithm}
\usepackage[noend]{algorithmic}
\usepackage{complexity}
\newclass{\threesat}{3SAT}

\usepackage[urlcolor=blue]{hyperref}
\hypersetup{colorlinks=true,linkcolor=blue,citecolor=blue,linktocpage}
\urlstyle{rm}
%\pagestyle{empty}

%helpful if your graphic files are in another directory
%\graphicspath{{./figs/}}

% Author metadata::begin %%%%%%%%%%%%%%%%%%%%%%%%%%%%%%%%%%%%%%%%%%%%%%%%
\title{Extending Simple Drawings\thanks{This work was started at the Crossing Numbers Workshop 2016 in Strobl (Austria). M.D. was partially supported by NSERC. I.P. is supported by the Austrian Science Fund (FWF): W1230. This project has received funding from the European Union’s Horizon 2020 research and innovation
programme under the Marie Skłodowska-Curie grant agreement No 754411.} 
}

% First names are abbreviated in the running head.
% If there are more than two authors, 'et al.' is used.
%
%
%\titlerunning{Abbreviated paper title}
% If the paper title is too long for the running head, you can set
% an abbreviated paper title here
%
\author{Alan Arroyo\inst{1} \and
	Martin Derka\inst{2} \and
	Irene Parada\inst{3}}
\authorrunning{A. Arroyo et al.}
% First names are abbreviated in the running head.
% If there are more than two authors, 'et al.' is used.
%
\institute{
	IST Austria, Klosterneuburg, Austria\\
	\email{alanmarcelo.arroyoguevara@ist.ac.at}
	 \and
	University of Waterloo, Ontario, Canada\\
	\email{mderka@uwaterloo.ca}\\
	 \and
	Graz University of Technology, Graz, Austria\\
	\email{iparada@ist.tugraz.at}
}

% Author macros::end %%%%%%%%%%%%%%%%%%%%%%%%%%%%%%%%%%%%%%%%%%%%%%%%%

\spnewtheorem{observation}[theorem]{Observation}{\bfseries}{\itshape}
%\spnewtheorem{problem}{Problem}{\itshape}{\itshape}
%\renewcommand{\theobservation}{\arabic{observation}.}

\begin{document}

\maketitle

\begin{abstract}
	Simple drawings of graphs are those in which 
	each pair of edges share at most one point, 
	either a common endpoint or a proper crossing.  
	In this paper we study the problem of extending a simple drawing $D(G)$ of a graph 
	$G$ by inserting a set of edges from the complement of $G$ into $D(G)$ 
	such that the result is a simple drawing. 
	In the context of rectilinear drawings, the problem is trivial.  
	For pseudolinear drawings, the existence of such an extension follows from Levi's enlargement lemma.
	In contrast, 
	we prove that deciding if a given set of edges can be inserted into a simple drawing is \NP-complete.
	Moreover, we show that the maximization version of the problem is \APX-hard. 
	We also present a polynomial-time algorithm 
	for deciding whether one edge $uv$ can be inserted into $D(G)$ 
	when $\{u,v\}$ is a dominating set for the graph $G$.
	
	\keywords{simple drawings \and edge insertion \and \NP-hardness \and \APX-hardness}
\end{abstract}

\section{Introduction}
A \emph{simple drawing} of a graph $G$ (also known as \emph{good drawing} or as \emph{simple topological graph} in the literature) 
 is a drawing $D(G)$ of $G$ in the plane such that
every pair of edges share at most one point that 
is either a proper crossing (no tangent edges allowed) or an endpoint.
Moreover, no three edges intersect in the same point
and edges must neither self-intersect nor contain other vertices than their endpoints. 
 Simple drawings, despite often considered in the study of crossing numbers, 
 have basic aspects that are yet unknown.

The long-standing conjectures on the crossing numbers of $K_n$ and $K_{n,m}$,  known as the Harary-Hill and Zarankiewicz’s conjectures, respectively, have drawn particular interest in the study of simple  drawings of complete and complete bipartite graphs.
The intensive study of these conjectures has produced deep results about simple drawings of $K_n$ \cite{Kyncl15,Pach2003} and $K_{n,m}$~\cite{CardinalFelsner16}. 

In contrast to our knowledge about $K_n$, little is known about simple drawings of general graphs. 
In  \cite{saturated} it was observed that, when studying simple drawings of general  graphs, 
it is  natural to try to extend them, by inserting the missing edges between non-adjacent vertices. 
One of the main results in this paper suggests that there is no hope for efficiently deciding when such 
operation can be performed. 

The complement $\overline{G}$ of a graph $G$ is the graph 
with the same vertex set as $G$ 
and where two distinct vertices are adjacent 
if and only they are not adjacent in $G$. 
Given a simple drawing $D(G)$ of a graph $G=(V,E)$ and a subset $M$ of {\em candidate edges} 
from $\overline{G}$, 
an \emph{extension} of $D(G)$ with $M$ 
is a simple drawing $D'(H)$ of the graph $H = (V,E\cup M)$ 
that contains $D(G)$ as a subdrawing. 
If such an extension exists, 
then we say that $M$ can be \emph{inserted} into $D(G)$. 

Given a simple drawing,  
an extension with one given edge is not always possible, as 
shown by Kyn\v{c}l~\cite{Kyncl13} 
(in Fig.~\ref{fig:kyncl} the edge $uv$ cannot be inserted, because $uv$ would cross an edge incident either to $u$ or to $v$). 
We can extend this example to a simple drawing of $K_{2,4}$ (Fig.~\ref{fig:noK24}) 
and we can then use it to construct drawings of $K_{n,m}$ with larger values of $m$ and $n$ 
in which an edge $uv$ cannot be inserted. 
Moreover, Kyn\v{c}l's drawing can be extended to a simple drawing of $K_6$ minus one edge where the missing edge cannot be inserted (Fig.~\ref{fig:noK6}).  From this drawing one can construct drawings of $K_{n}$ with $n\geq 6$ minus one edge where the only 
missing edge cannot be inserted.

Extensions, by inserting both vertices and edges,  
have received a great deal of attention in the last decade, specially  
 for (different classes of) plane drawings~\cite{ext_plane,Planarstraightline_Bagheri_2010,PartialConstrainedLevel_Brueckner_2017,ext_up_2019,Kuratowskitypetheorem_Jelinek_2013,ExtendingConvexPartial_Mchedlidze_2015,ext_straight_06}. 
It has also been of interest to study crossing number questions on planar graphs with one additional edge~\cite{AddingOneEdge_Cabello_2013,InsertingEdgePlanar_Gutwenger_2005,crossingnumbercubic_Riskin_1996}. 
Note that the term \emph{augmentation} has 
also been used in the literature for the similar problem of 
inserting edges and/or vertices to a graph~\cite{augment}.
Extensions of simple drawings have been previously considered in the context of \emph{saturated} drawings, 
that is, drawings where no edge can be inserted~\cite{hajnal2015saturated,saturated}.
\begin{figure}[tb]
	\centering
	\subfloat[\label{fig:kyncl}Example by Kyn\v{c}l~\cite{Kyncl13}.]{\includegraphics[page =1]{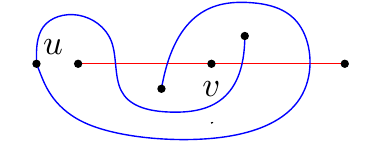}}	
	\quad
	\subfloat[\label{fig:noK24}Drawing of $K_{2,4}$.]{\includegraphics[page = 2]{kyncl_nonexten2.pdf}}
	\quad
	\subfloat[\label{fig:noK6}Drawing of $K_{6}-uv$.]{\includegraphics[page = 3]{kyncl_nonexten2.pdf}}
	\caption{Drawings in which the edge $uv$ cannot be inserted.}
	\label{fig:6vert}
\end{figure}

\paragraph{Our Contribution} 
We study the computational complexity of extending a simple drawing $D(G)$ of a graph $G$.
In Section~\ref{sec:Addk}, 
we show that deciding if $D(G)$ %a simple drawing $D(G)$ 
can be extended with a set $M$ of candidate edges 
is \NP-complete. 
Moreover, in Section~\ref{sec:APXhard}, 
we prove that finding the largest subset of edges from $M$ 
that extend $D(G)$ is \APX-hard.
Finally, in Section \ref{sec:adding_one_edge}, 
we present a polynomial-time algorithm to decide 
whether an edge $uv$ can be inserted into $D(G)$
when $\{u,v\}$ is a dominating set for $G$.

\section{Inserting a given set of edges is \NP-complete}
\label{sec:Addk}

In this section we prove the following result: 

\begin{theorem}
	Given a simple drawing $D(G)$ of a graph $G=(V,E)$ 
	and a set $M$ of edges of the complement of $G$, 
	it is \NP-complete to decide if $D(G)$ can be extended with the set $M$.
	\label{thm:addk}
\end{theorem}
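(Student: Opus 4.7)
Membership in \NP\ follows from a standard combinatorial certificate. In any simple drawing each pair of edges meets at most once, so an extension is fully described by giving the rotation at each vertex together with, for every edge $e\in M$, the (polynomially bounded) cyclic sequence of edges of $E\cup M$ that $e$ crosses. Given this data one can verify in polynomial time that it extends the rotation system and crossing record of $D(G)$ consistently, and that the resulting abstract simple drawing can be realized by routing each new edge through the arrangement defined by $D(G)$.

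For \NP-hardness I plan to reduce from \threesat. Given a formula $\varphi$ on variables $x_1,\dots,x_n$ and clauses $C_1,\dots,C_m$, I will construct in polynomial time a simple drawing $D(G)$ and a candidate set $M$ with the property that every edge of $M$ can be inserted simultaneously if and only if $\varphi$ is satisfiable. The global construction consists of a variable gadget for each $x_i$, a clause gadget for each $C_j$, wires carrying each literal's truth value to the clauses it appears in, and small crossover gadgets that let wires meet without altering their state. All gadgets live in pairwise disjoint regions of a planar skeleton, so that each candidate edge of $M$ is topologically confined to a prescribed region and gadget interactions occur only along the wires.

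The variable gadget is the technical heart of the reduction. For each $x_i$ I would build a subdrawing containing a distinguished candidate edge whose admissible routes fall into exactly two combinatorially distinct classes, interpreted as $x_i=\text{true}$ and $x_i=\text{false}$. Such two-alternative gadgets can be produced by adapting the non-extendability example of Kyn\v{c}l (Fig.~\ref{fig:kyncl}): one places blocking edges and vertices around the candidate edge so that all but two topological routes are ruled out, and arranges these two surviving routes to leave the gadget through two distinct \emph{ports}. The clause gadget for $C_j$ then contains a candidate edge that is insertable precisely when at least one of the three wires from its literals arrives through its ``satisfying'' port, encoding the clausal disjunction; consistency between a variable and all occurrences of its literals is enforced by propagating the choice of port along the wires.

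The main obstacle is the topological rigidity required from the gadgets. On the one hand, a variable gadget must admit \emph{exactly} the two intended routings, with no third alternative and no ``mixed'' route that uses pieces of both. On the other hand, all other edges of $M$ must still be simultaneously insertable in each of the two configurations, so that satisfying assignments really do yield extensions. I expect the proof of correctness, and in particular the direction showing that every valid extension of $D(G)$ induces a consistent truth assignment, to come down to a careful case analysis of how each new edge may enter and leave each gadget, exploiting the at-most-one-crossing property together with the Kyn\v{c}l-type blockers to rule out all unintended configurations.
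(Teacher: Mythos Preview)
Your high-level plan matches the paper: reduce from a SAT variant via variable, clause, and wire gadgets, each contributing one candidate edge to $M$. The paper also argues \NP\ membership via a combinatorial encoding of the extension, as you do.

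The substantive gap is that the hard content of this theorem \emph{is} the gadget construction, and your proposal stops short of giving one. Two points in particular:

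\begin{itemize}
\item You plan to use general \threesat\ and handle wire intersections with ``small crossover gadgets that let wires meet without altering their state.'' In the simple-drawing setting such a crossover is not obviously buildable: the wires are themselves future $uv$-edges, so a crossover must force two not-yet-drawn edges to pass through each other without constraining either route, while still keeping every pair of edges crossing at most once. The paper avoids this problem entirely by reducing from \emph{monotone} \threesat\ and laying the incidence graph out as a 2-page book drawing (positive clauses above the spine, negative below), so that wires never need to cross one another outside the gadgets. Unless you can exhibit a working crossover, your reduction does not go through.
\item For the variable gadget you say you will ``adapt the non-extendability example of Kyn\v{c}l'' so that the candidate edge has exactly two admissible routes. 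That is the right intuition, but the paper's actual gadget (two nested $4$-cycles on $u,v$ plus two carefully threaded blocker edges) is what pins the route down to precisely two cells; similar explicit constructions give the three-region clause gadget and the two-region wire. Without these concrete drawings, the ``careful case analysis'' you anticipate has nothing to analyse, and the direction ``extension $\Rightarrow$ satisfying assignment'' is unsupported.
\end{itemize}

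In short: the architecture is right, but the paper's choice of monotone \threesat\ with a 2-page layout is not cosmetic --- it is what makes the reduction go through without a crossover gadget, and the explicit gadget drawings (and Lemma~\ref{lem:wire} governing how a wire interacts with its variable and clause) are the proof.
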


Notice first that the problem is in \NP, since it can be described combinatorially. 
Our proof of Theorem~\ref{thm:addk} is based on a reduction from \emph{monotone \threesat}~\cite{pla-mon-rect-3SAT}.
An instance of that problem consists of a Boolean formula $\phi$ in 3-CNF 
with a set of variables $X=\{x_1,\ldots, x_n\}$ and a set of clauses $K=\{C_1,\ldots, C_m\}$. 
Moreover, in each clause either all the literals are positive ({\em positive clause}) or they are all negative ({\em negative clause}). 
The {\em bipartite graph $G(\phi)$ associated to $\phi$} is 
the graph with vertex set $X\cup K$ 
and where a variable $x_i$ is adjacent to a clause $C_j$ if and only if $x_i\in C_j$ or  $\overline{x_i}\in C_j$.

\begin{figure}[tb]
	\subfloat[Variable gadget $\mathcal{X}$.\label{fig:var}]{
		\includegraphics[page=1]{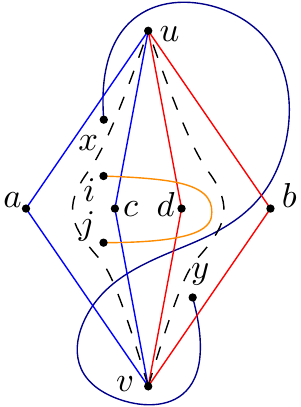}
	}
	\quad
	\subfloat[Clause gadget $\mathcal{C}$.\label{fig:clause}]{
		\includegraphics[page=2]{reduction_exactlyk}
	}
	\quad
	\centering 
	\subfloat[Wire gadget $\mathcal{W}$.\label{fig:wire}]{
		\includegraphics[page=3]{reduction_exactlyk}
	}
	\caption{Basic gadgets for the proof of Theorem~\ref{thm:addk}.}\label{fig:gadgets3SAT}	
\end{figure}

We now show how to construct a simple drawing from a given formula. 
We start by introducing our three basic gadgets, 
the \emph{variable gadget}, the \emph{clause gadget}, and the \emph{wire gadget}, 
shown in Fig.~\ref{fig:gadgets3SAT}. 

The variable gadget contains two nested cycles, 
$avbu$ on the outside and $cvdu$ on the inside,   
drawn in the plane without any crossings.
Two additional vertices $x$ and $y$ are drawn in the interior of $avcu$ and $dvbu$, respectively. 
They are connected with an edge that, starting in $x$, crosses the edges $au$, $ub$, $dv$, $cv$, $av$, and $vb$, in this order, and ends in $y$.  
Another two vertices $i$ and $j$ are drawn inside the region in the interior of $avcu$ that is incident to $x$. 
They are connected with an edge that, starting in $i$, crosses the edges $uc$, $ud$, $vd$, and $vc$, in this order, and ends in $j$; 
see Fig.~\ref{fig:var}. 
Notice that the edge $uv$ can be inserted only in two possible regions: 
either inside the cycle $avcu$ or inside the cycle $dvbu$. 
Drawing the edge $uv$ in any other region would force it to cross $uj$ or $xy$ more than once. 
The clause gadget and the wire gadget are similarly defined; 
see Fig.~\ref{fig:gadgets3SAT}b--c. 

In each of these three gadgets shown in Fig.~\ref{fig:gadgets3SAT}, 
the edge $uv$ can only be inserted in the regions where the dashed arcs are drawn. 
In the rest of the paper, when we refer to the {\em regions} in a gadget we mean these regions where the edge $uv$ can be inserted.

In a variable gadget, 
these regions encode the truth assignment of the corresponding variable $x_i$:  
inserting the edge $uv$ in the left region 
corresponds to the assignment $x_i=\texttt{true}$, 
while inserting it in the right region corresponds to $x_i=\texttt{false}$. 
We call these left and right regions in a variable gadget 
the \texttt{true} and \texttt{false} regions, respectively. 
In a clause gadget, 
each of the three regions is associated to a literal in the corresponding clause.  
Wire gadgets propagate the truth assignment of the variables to the clauses. 
They are drawn between the gadgets corresponding to clauses and variables that are incident in $G(\phi)$. 
The idea is that if an assignment makes a literal not satisfy a clause, 
then the edge $uv$ in the wire gadget 
blocks the region in the clause gadget corresponding to that literal 
by forcing $uv$ to cross that region twice.

Let $w^{(\mathcal{G})}$ denote vertex $w$ in gadget $\mathcal{G}$. 
The following lemma shows that we can get the desired behavior with a wire gadget connecting a variable gadget and a clause gadget. 
The precise placement of a wire gadget with respect to the variable gadget and the clause gadget that it connects is 
illustrated in Fig.~\ref{fig:red3SAT}. 

\begin{figure}[tb]
	\centering 
	\includegraphics[page=1]{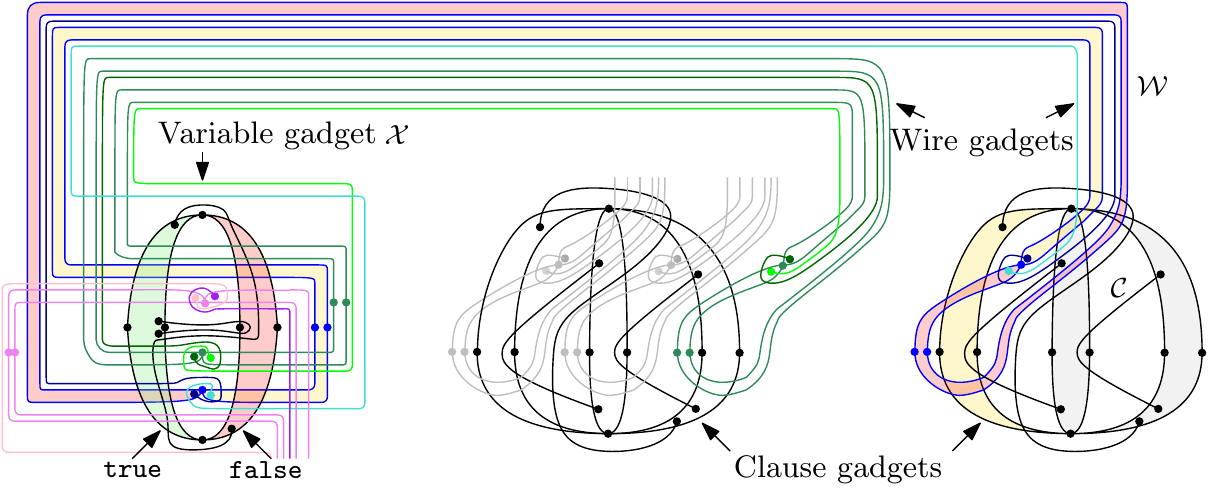}
	\caption{Reduction from monotone \threesat.}\label{fig:red3SAT}	
\end{figure}

\begin{lemma}\label{lem:wire}
	We can combine a variable gadget $\mathcal{X}$, a clause gadget $\mathcal{C}$, and a wire gadget $\mathcal{W}$
	to produce a simple drawing  with the following properties. 
	\begin{itemize}
		\item If $u^{(\mathcal{X})}v^{(\mathcal{X})}$ is inserted in the \texttt{false} region in $\mathcal{X}$, then inserting $u^{(\mathcal{W})}v^{(\mathcal{W})}$ 
		prevents $u^{(\mathcal{C})}v^{(\mathcal{C})}$ from being inserted in one specified target region in $\mathcal{C}$. 
		\item If $u^{(\mathcal{X})}v^{(\mathcal{X})}$ is inserted in the \texttt{true} region in $\mathcal{X}$, then we can insert $u^{(\mathcal{W})}v^{(\mathcal{W})}$ in a way such that $u^{(\mathcal{C})}v^{(\mathcal{C})}$ can then be inserted in any region in $\mathcal{C}$.
	\end{itemize}	
\end{lemma}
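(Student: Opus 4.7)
The plan is to prove Lemma~\ref{lem:wire} by a direct topological analysis of the combined drawing of $\mathcal{X}$, $\mathcal{W}$, and $\mathcal{C}$ depicted in Fig.~\ref{fig:red3SAT}. First I would formalise that combined drawing: the wire gadget is placed so that each of its two admissible regions reaches from one admissible region of $\mathcal{X}$ to a specific region of $\mathcal{C}$. I will call one admissible wire region the \emph{true wire region}, whose closure meets the \texttt{true} region of $\mathcal{X}$ and a non-target region of $\mathcal{C}$, and the other the \emph{false wire region}, whose closure meets the \texttt{false} region of $\mathcal{X}$ and the target region of $\mathcal{C}$. I would then verify by inspection that the combined drawing is simple: each gadget is simple by Fig.~\ref{fig:gadgets3SAT}, and the only pairs of edges in different gadgets that meet are those whose neighbourhoods overlap at the two interfaces, each of which can be drawn to contribute at most one intersection per pair.

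For the first bullet, I assume that $u^{(\mathcal{X})}v^{(\mathcal{X})}$ has been inserted in the \texttt{false} region of $\mathcal{X}$. I would show that, together with an appropriate portion of the boundary of that region, this edge forms a Jordan curve $J_1$ that separates $u^{(\mathcal{W})}$ from $v^{(\mathcal{W})}$ along the true wire region: any simple arc from $u^{(\mathcal{W})}$ to $v^{(\mathcal{W})}$ routed through that face would have to cross $u^{(\mathcal{X})}v^{(\mathcal{X})}$ at least twice, which simplicity forbids. Hence $u^{(\mathcal{W})}v^{(\mathcal{W})}$ is forced into the false wire region. Repeating the argument one gadget further, the newly inserted $u^{(\mathcal{W})}v^{(\mathcal{W})}$, together with a portion of the boundary of the target region of $\mathcal{C}$, forms a second Jordan curve $J_2$ that separates $u^{(\mathcal{C})}$ from $v^{(\mathcal{C})}$ inside the target region; any insertion of $u^{(\mathcal{C})}v^{(\mathcal{C})}$ in the target region would therefore cross $u^{(\mathcal{W})}v^{(\mathcal{W})}$ at least twice and is ruled out.

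For the second bullet, if $u^{(\mathcal{X})}v^{(\mathcal{X})}$ lies in the \texttt{true} region then the symmetric Jordan-curve argument blocks the false wire region, so $u^{(\mathcal{W})}v^{(\mathcal{W})}$ can be inserted in the true wire region. I would then check, region by region, that $u^{(\mathcal{W})}v^{(\mathcal{W})}$ drawn this way does not separate $u^{(\mathcal{C})}$ from $v^{(\mathcal{C})}$ inside any of the three admissible regions of $\mathcal{C}$, so all three remain available for $u^{(\mathcal{C})}v^{(\mathcal{C})}$.

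The main obstacle I expect is the Jordan-curve/parity argument that appears three times (once per gadget). Each application requires choosing the correct portion of a gadget boundary to complete a Jordan curve with the previously inserted edge, verifying that the resulting curve separates the two endpoints of the next edge we want to insert, and concluding via the Jordan curve theorem that a simple arc between those endpoints must cross it an even and positive number of times. While these separations are visually clear from Figs.~\ref{fig:gadgets3SAT} and~\ref{fig:red3SAT} and are exactly what the admissible regions of each gadget were engineered to produce, presenting them without a long enumeration of labelled boundary arcs will be the delicate bookkeeping task; the remainder is a short symmetry argument between the \texttt{true} and \texttt{false} sides.
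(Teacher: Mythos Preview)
Your approach—direct topological analysis of Fig.~\ref{fig:red3SAT} via double-crossing/Jordan-curve arguments—is exactly the paper's approach; the paper's proof is the same chain ``\texttt{false} in $\mathcal{X}$ $\Rightarrow$ wire forced into the blocking region $\Rightarrow$ target region of $\mathcal{C}$ unavailable'', stated with less Jordan-curve formalism and a closing remark on how to handle a non-leftmost target region.

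Two details in your write-up do not match the actual construction, though neither is fatal. First, your description of the wire regions (``each reaches from one admissible region of $\mathcal{X}$ to a specific region of $\mathcal{C}$'') is not how the gadgets interlock: in the paper's drawing one wire region overlaps the \texttt{false} region of $\mathcal{X}$ and no region of $\mathcal{C}$, while the other overlaps the target region of $\mathcal{C}$ and no region of $\mathcal{X}$. Your blocking chain is still the correct one, but the ``meets'' labels you assign are inconsistent with it and with your own Jordan-curve step. Second, the symmetry you invoke for the second bullet does not hold: when $u^{(\mathcal{X})}v^{(\mathcal{X})}$ is placed in the \texttt{true} region, \emph{neither} wire region is blocked—the paper explicitly says $u^{(\mathcal{W})}v^{(\mathcal{W})}$ can then be inserted in either region of $\mathcal{W}$. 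The variable–wire interface is deliberately asymmetric. This does not affect your conclusion (you end up selecting a non-blocking wire region anyway), but the reason you give for its availability is wrong and should simply be replaced by the observation that the \texttt{true} region of $\mathcal{X}$ is disjoint from both wire regions.
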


\begin{proof}
We start with a drawing of the variable gadget $\mathcal{X}$ and the clause gadget $\mathcal{C}$ such that the two gadgets are drawn 
 on a line and they are disjoint.   
A representation of how the wire gadget is then inserted is shown in Fig.~\ref{fig:red3SAT}.  
In this proof we focus on the wire gadget drawn with blue edges and vertices. %in this figure.

In Fig.~\ref{fig:red3SAT}, 
gadget $\mathcal{X}$ lies to the left of gadget $\mathcal{C}$. 
The \texttt{true} and \texttt{false} regions in $\mathcal{X}$ 
are shaded in green and red, respectively. 
We assume that the target region in $\mathcal{C}$ is the leftmost one, 
shaded in yellow. % in Fig.~\ref{fig:red3SAT}. 
The left and right regions in the wire gadget are shaded in red and yellow, respectively. 

If the edge $u^{(\mathcal{X})}v^{(\mathcal{X})}$ is inserted in the \texttt{false} region in $\mathcal{X}$ 
then the edge $u^{(\mathcal{W})}v^{(\mathcal{W})}$ cannot be inserted in the yellow region in $\mathcal{W}$, 
since it would cross $u^{(\mathcal{X})}v^{(\mathcal{X})}$ twice. 
Thus, $u^{(\mathcal{W})}v^{(\mathcal{W})}$ can only be inserted in the red region in $\mathcal{W}$.
If inserted in that region, $u^{(\mathcal{C})}v^{(\mathcal{C})}$ cannot be inserted in the yellow region in $\mathcal{C}$, 
since it would cross $u^{(\mathcal{W})}v^{(\mathcal{W})}$ twice.
In contrast, if the edge $u^{(\mathcal{X})}v^{(\mathcal{X})}$ is inserted in the \texttt{true} (green) region in $\mathcal{X}$, 
then $u^{(\mathcal{W})}v^{(\mathcal{W})}$ can be inserted in either of the two regions in $\mathcal{W}$. 
In particular, it can be inserted in the yellow region 
in a way such that $u^{(\mathcal{C})}v^{(\mathcal{C})}$ can then be inserted in any region in $\mathcal{C}$.

Finally, notice that if the target region in $\mathcal{C}$ is not the leftmost one, 
we can adapt the construction by leaving 
the region(s) to the left in $\mathcal{C}$ 
uncrossed by the wire gadget $\mathcal{W}$; see the clause gadget in the middle of Fig.~\ref{fig:red3SAT}.    
\end{proof}

Let $\phi$ be an instance of monotone \threesat and let $G(\phi)$ be the 
bipartite graph associated to $\phi$. 
Let $D(\phi)$ be a 2-page book drawing of $G(\phi)$ in which 

(i) all vertices lie on an horizontal line,
and from left to right, first the ones corresponding to negative clauses, then to 
variables, and finally to positive clauses; and 
%(i) all the vertices lie on a horizontal line, 
%the ones corresponding to negative clauses lie to the right of the rest and 
%the ones corresponding to variables lie to the right of those corresponding to positive clauses; and  
(ii) the edges incident to vertices corresponding to positive clauses are drawn as circular arcs above that horizontal line, 
while the ones incident to vertices corresponding to negative clauses are drawn as circular arcs below it. 
In an slight abuse of notation, 
we refer to the vertices in $D(\phi)$ corresponding to variables and clauses simply as variables and clauses, respectively.

We construct a simple drawing $D'$ from $D(\phi)$ 
by first replacing the variables and clauses by  
variable gadgets and clause gadgets, respectively, and drawn in disjoint regions. 
Moreover, the clause gadgets corresponding to negative clauses are rotated $180^\circ$. 
We then insert the wire gadgets. 
The edges in $D(\phi)$ connecting variables to positive clauses 
are replaced by wire gadgets drawn as in the proof of Lemma~\ref{lem:wire}; see Fig.~\ref{fig:red3SAT}. 
Similarly, the edges in $D(\phi)$ connecting variables to negative  clauses 
are replaced by wire gadgets drawn as the ones before, but rotated $180^\circ$.  

We now describe how to draw the wire gadgets with respect to each other, 
so that the result is a simple drawing; see Fig.~\ref{fig:red3SAT} for a detailed illustration. 
First, we focus on the drawing locally around the variable gadgets. 
Consider a set of edges in $D(\phi)$ connecting a variable with some positive clauses. 
The drawing $D(\phi)$ defines a clockwise order of these edges around the common vertex starting from the horizontal line. 
We insert the %part of the 
corresponding wire gadgets locally around the variable gadget following this order. 
Each new gadget is inserted shifted up and to the right with respect to the previous one (as the blue and green gadgets depicted in Fig. \ref{fig:red3SAT}). 
Edges in $D(\phi)$ connecting a variable with some negative clauses 
are replaced by wire gadgets in an analogous manner with a $180^\circ$ rotation. 
We assign the three different regions in a clause gadget 
to the target regions in the wire gadgets following the 
rotation of the edges around the clause in $D(\phi)$. 
(Not that we can assume without loss of generality, by possibly duplicating variables, 
that each clause in $\phi$ contains three literals.)
Thus, locally around a clause gadget, 
it is then possible to draw the different wire gadgets connecting to it without crossing. 
Since $D(\phi)$ is a 2-page book drawing, 
the constructed drawing $D'$ is a simple drawing. 

Let $M$ be the set of $uv$ edges of all the gadgets. 
The fact that $\phi$ is satisfiable if and only if $M$ can be inserted into $D'$ follows now from Lemma~\ref{lem:wire}, finishing the proof of Theorem~\ref{thm:addk}.

\section{Maximizing the number of edges inserted is \APX-hard}
\label{sec:APXhard}

In this section we show that the maximization version of the problem of inserting missing edges from a prescribed set into a simple drawing is \APX-hard. 
This implies that, if $\P\neq \NP$,  then no \PTAS\  exists  for this problem. 
We start by showing that this maximization problem is \NP-hard.

\begin{theorem}
	Given a simple drawing $D(G)$ of a graph $G=(V,E)$ 
	and a set $M$
	of edges in the complement $\overline{G}$, 
	it is \NP-hard to find a maximum subset of edges $M'\subseteq M$ that extends  $D(G)$.
	\label{thm:max}
\end{theorem}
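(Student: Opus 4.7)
The plan is to reuse the reduction from the proof of Theorem~\ref{thm:addk} almost verbatim. Given a monotone \threesat{} instance $\phi$, I would construct the same simple drawing $D'$ (of the underlying graph $G$) together with the same candidate set $M$ of all gadget $uv$-edges. The claim is that this reduction produces a sharp dichotomy: the maximum size of a subset $M' \subseteq M$ that extends $D'$ equals $|M|$ when $\phi$ is satisfiable, and is at most $|M|-1$ otherwise. This gap immediately yields \NP-hardness, since a polynomial-time algorithm for the maximization version would decide monotone \threesat.

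The ``unsatisfiable'' direction is built into Theorem~\ref{thm:addk}: if no satisfying assignment exists then not all of $M$ can be inserted, hence any valid $M'$ has $|M'|\le |M|-1$. For the ``satisfiable'' direction I would exhibit a simultaneous insertion of \emph{every} edge of $M$ by following Lemma~\ref{lem:wire}. Concretely, fix a satisfying assignment and: (i) for every variable gadget $\mathcal{X}$, insert $u^{(\mathcal{X})}v^{(\mathcal{X})}$ in the region matching the variable's truth value; (ii) for every clause gadget $\mathcal{C}$, pick a literal that satisfies its clause and insert $u^{(\mathcal{C})}v^{(\mathcal{C})}$ in the region of $\mathcal{C}$ associated with that literal; (iii) for every wire gadget $\mathcal{W}$ whose literal is satisfied, insert $u^{(\mathcal{W})}v^{(\mathcal{W})}$ through the non-blocking (yellow) region, and for every wire whose literal is falsified, insert it through the region that blocks some region of $\mathcal{C}$ \emph{other} than the chosen one. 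Because the literal chosen in (ii) is satisfied, the corresponding wires in (iii) are of the ``non-blocking'' type, and Lemma~\ref{lem:wire} guarantees that each local insertion is legal.

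The main obstacle I foresee is verifying that these local choices compose consistently across $D'$. Multiple wires share variable and clause gadgets, so I must check that the 2-page book layout from Section~\ref{sec:Addk} admits all the wires simultaneously realizing their prescribed configurations without introducing conflicts beyond those already controlled by Lemma~\ref{lem:wire}. Since the gadgets are drawn in pairwise disjoint regions and the wire gadgets around each variable and each clause are arranged according to the rotation inherited from $D(\phi)$, this should reduce to a direct inspection of the construction rather than requiring a new combinatorial argument. Once that composition is verified, the \NP-hardness of the maximization problem follows immediately from the dichotomy above.
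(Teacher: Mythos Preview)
Your proposal is correct, and in fact simpler than you realise: once Theorem~\ref{thm:addk} is available, \NP-hardness of the maximization version is immediate, since a polynomial-time algorithm computing the maximum $|M'|$ would in particular tell you whether $|M'|=|M|$, i.e., whether all of $M$ can be inserted. Your ``main obstacle'' about composing the local insertions is not an obstacle at all---the forward direction of Theorem~\ref{thm:addk} already asserts that when $\phi$ is satisfiable the entire set $M$ can be inserted simultaneously, so you can simply cite it rather than re-prove it.

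This is \emph{not} the route the paper takes. The paper gives a fresh reduction from the maximum independent set problem, introducing new vertex and edge gadgets and showing that $G$ has an independent set of size $k$ if and only if a certain drawing $D'(G')$ admits an extension by $|E|+k$ edges from a new candidate set $M$. The reason for this extra work is not Theorem~\ref{thm:max} itself but Corollary~\ref{cor:apx-hard}: the MIS reduction is an \L-reduction when the input graph has degree at most three, which yields \APX-hardness. Your approach cannot deliver this, because in the monotone \threesat{} construction the optimum is always either $|M|$ or at most $|M|-1$, so the gap is too small to transfer inapproximability. In short: your argument is the cleaner proof of Theorem~\ref{thm:max} in isolation, but the paper's reduction is doing double duty.
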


Our proof of Theorem~\ref{thm:max} is based on a reduction from the maximum independent 
set problem (MIS).
By showing that the reduction when the input graph has vertex degree at most three is actually a \PTAS-reduction 
we will then conclude that the problem is \APX-hard. 

An independent set of a graph $G=(V,E)$ is a set of vertices $S\subseteq V$ 
such that no two vertices in $S$ are incident with the same edge.
The problem of determining the maximum independent set (MIS) of a given graph is \APX-hard 
even when the graph has vertex degree at most three~\cite{APX-MIS-3}.
We first describe the construction of a simple drawing $D'(G')$ 
from the graph $G$ of a given MIS instance. 
Then we argue that for a well-selected set of edges $M$ that are not present
in $D'(G')$, 
finding a maximum subset $M' \subseteq M$ that can be inserted into $D'(G')$
is equivalent to finding a maximum independent set of $G$. 

\subsection{Constructing a drawing from a given graph}

\begin{figure}[tb]
	\centering 
	\subfloat[Vertex gadget $\mathcal{V}$.\label{fig:d1}]{
		\includegraphics[page=1]{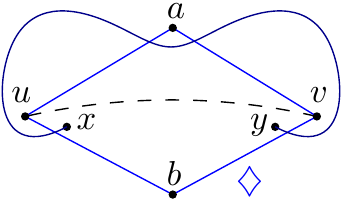}
	}
	\hspace{5em}
	\subfloat[Edge gadget $\mathcal{E}$.\label{fig:d2}]{
		\includegraphics[]{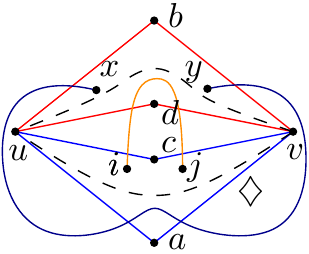}
	}
	
	\subfloat[\label{fig:P2} Two vertex gadgets interlinked by an edge gadget.]{
		\includegraphics[page =1]{gadget}
	}
	\caption{Basic gadgets and drawings for the proof of Theorem~\ref{thm:max}.}\label{fig:basic_gadgets}	
\end{figure}
We begin by introducing our two basic gadgets, 
the vertex gadget $\mathcal{V}$ and the edge gadget $\mathcal{E}$, shown in Fig.~\ref{fig:basic_gadgets}. 
They are reminiscent of the gadgets in the previous section, 
but adapted to this different reduction. 
Similarly as in the previous gadgets, 
there is only one region in which the edge $uv$ can be inserted into $\mathcal{V}$ and 
only two regions in which the edge $uv$ can be inserted into $\mathcal{E}$. 
These regions are the ones in which the dashed arcs in Fig.~\ref{fig:d2} are drawn.

In Fig.~\ref{fig:P2} we combined an edge gadget and two vertex gadgets.  
This figure shows a copy $\mathcal{E}^{(e)}$ of the gadget $\mathcal{E}$ (that corresponds to an edge $e=wz$) 
drawn over two different copies, $\mathcal{V}^{(w)}$ and $\mathcal{V}^{(z)}$, of the gadget $\mathcal{V}$ (that  correspond to vertices $w$ and $z$, respectively). 
We relabel the vertices in the copies of these gadgets by using the vertex or edge to which they correspond as  their superscripts.
Since there is only one region in which $v^{(w)}u^{(w)}$ and $v^{(z)}u^{(z)}$ can be drawn, 
inserting both of these edges prevents $v^{(e)}u^{(e)}$ from being inserted.
Inserting either only $v^{(w)}u^{(w)}$ or only $v^{(z)}u^{(z)}$ leaves exactly one possible region where  $v^{(e)}u^{(e)}$ can be inserted.

\begin{sidewaysfigure}
	\centering
	\includegraphics[width=1\columnwidth,page=2]{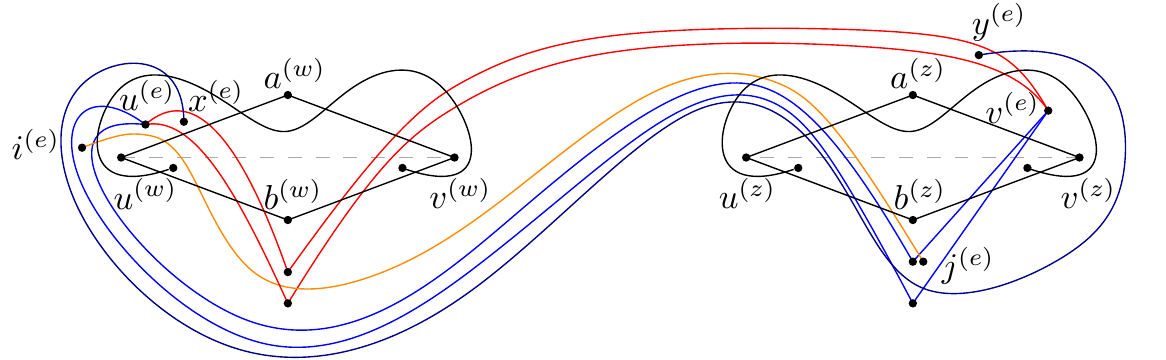}
	\caption{\label{fig:K4} Drawing obtained by a reduction from $K_4$.}
\end{sidewaysfigure}

We have 
all the  ingredients needed for our construction.
Suppose that we are given a simple graph $G = (V,E)$. 
This graph admits a 1-page book drawing $D(G)$ in which the vertices are placed on a horizontal line and the edges are drawn as circular arcs in the upper halfplane. 
Since the edge gadget does not interlink the vertex gadgets symmetrically, 
we consider the edges in $D(G)$ with an orientation from their left endpoint to their right one.

The following lemma shows that is possible to replace 
each vertex $w\in V$ in the drawing by a vertex gadget $\mathcal{V}^{(w)}$ 
%(such that gadgets corresponding to different vertices are drawn in disjoint regions)
and each edge $e\in E$ by an edge gadget $\mathcal{E}^{(e)}$,
and obtain simple drawing $D'(G')$ 
(where $G'$ is the disjoint union of the underlying graphs of the vertex- and  edge gadgets).

\begin{lemma}\label{lem:simultaneous_d2}
	Given a 1-page book drawing 
	$D(G)$ of a graph $G = (V,E)$, 
	then we can replace every vertex by a vertex gadget and every edge by an edge gadget to obtain a simple drawing.	
\end{lemma}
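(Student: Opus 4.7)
The plan is to build $D'(G')$ by placing a small copy of the vertex gadget at every vertex of $D(G)$ and routing a copy of the edge gadget inside a thin tubular neighbourhood of every arc of $D(G)$, and then to verify simplicity locally. First, fix pairwise disjoint open disks $N_w$ around each vertex $w\in V$ on the spine and draw $\mathcal{V}^{(w)}$ inside $N_w$. Next, for every edge $e=wz\in E$, oriented from its left endpoint to its right one, take a thin strip $S_e$ that is a tubular neighbourhood of the circular arc representing $e$, meeting the family of disks only at $N_w$ and $N_z$, and draw $\mathcal{E}^{(e)}$ inside $S_e$ so that it attaches to $\mathcal{V}^{(w)}$ and $\mathcal{V}^{(z)}$ exactly in the configuration of Fig.~\ref{fig:P2}. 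The left-to-right orientation of $e$ fixes which endpoint plays which role, resolving the attachment asymmetry noted before the lemma statement.

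With this setup the simplicity check splits into local pieces. Inside each disk $N_w$ only the gadget $\mathcal{V}^{(w)}$ together with the attaching tips of the incident edge gadgets is present, and this configuration is a simple drawing by the local picture in Fig.~\ref{fig:P2}. Inside each strip $S_e$ only the gadget $\mathcal{E}^{(e)}$ itself appears, which is simple by construction. If $e,f\in E$ share no endpoint and their arcs do not cross in $D(G)$, then $S_e$ and $S_f$ can be chosen disjoint so their gadgets do not interact; if $e,f$ share an endpoint, their only interaction occurs inside the common disk $N_w$ and is again handled by the previous local picture.

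The main obstacle is the remaining case: a pair $e,f$ of edges whose arcs cross in $D(G)$. Because $D(G)$ is a 1-page book drawing, the two arcs meet in a single point $p$, and by making $S_e$ and $S_f$ sufficiently thin we may arrange that $S_e\cap S_f$ is a small disk $\Delta$ around $p$ disjoint from every vertex gadget and from every other strip. Inside $\Delta$ each of $\mathcal{E}^{(e)}$ and $\mathcal{E}^{(f)}$ appears as a bundle of parallel curves traversing $\Delta$, and these two transverse bundles can be realigned so that every pair consisting of one curve from each bundle meets in exactly one proper crossing and no three edges are concurrent. This local adjustment preserves the simplicity of each individual gadget while making the combined drawing simple at $\Delta$. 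Performing the same adjustment at each crossing of $D(G)$ yields the desired simple drawing $D'(G')$.
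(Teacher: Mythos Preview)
Your high-level plan is reasonable, but two steps hide the real work and, as written, do not go through.

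The most serious gap is the case where two edge gadgets share an endpoint. You appeal to ``the local picture in Fig.~\ref{fig:P2}'', but that figure shows a \emph{single} edge gadget attached to two vertex gadgets; it says nothing about several edge gadgets attached to the same $\mathcal{V}^{(w)}$ simultaneously. The attachment is not at a point: $\mathcal{E}^{(e)}$ crosses each of $u^{(w)}a^{(w)}$, $u^{(w)}b^{(w)}$, $v^{(w)}a^{(w)}$, $v^{(w)}b^{(w)}$, and when a second gadget $\mathcal{E}^{(f)}$ attaches you must choose the relative order of all these crossings so that every edge of $\mathcal{E}^{(e)}$ meets every edge of $\mathcal{E}^{(f)}$ at most once, both inside $N_w$ and after the two bundles leave $N_w$. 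This is exactly the hard part, and the paper spends most of its proof on it: it fixes an insertion order (left to right, short edges before long), gives six explicit rules for where the new gadget's intersections with the edges of $\mathcal{V}^{(w)}$ and $\mathcal{V}^{(z)}$ must be placed relative to those of previously inserted gadgets, and finally verifies simplicity by checking that every four-vertex subconfiguration embeds in the explicit $K_4$ drawing of Fig.~\ref{fig:K4}.

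A second gap is your treatment of crossings. You assert that inside $\Delta$ each gadget ``appears as a bundle of parallel curves'', but the edge gadget has internal vertices and internal crossings; you would have to argue that this internal structure can be kept away from every $\Delta$, which you do not do. More fundamentally, in the paper's construction the arcs of a single edge gadget are routed in \emph{both} half-planes (some follow $D(G)$, others its mirror image), so the gadget does not live in a tubular neighbourhood of the arc of $e$ at all; your strip model is incompatible with the actual gadget as drawn in Fig.~\ref{fig:P2}.
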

\begin{proof}
	We show that the copies $\{\mathcal{E}^{(e)}\,:\, e\in E\}$ can be inserted into $\bigcup_{w\in V} \mathcal{V}^{(w)}$ such that such that vertex gadgets corresponding to different vertices are drawn in disjoint regions and 
	for every edge $e=wz\in E$, $\mathcal{V}^{(w)}\cup \mathcal{V}^{(z)}\cup \mathcal{E}^{(e)}$ is as in Fig.~\ref{fig:P2} (up to interchanging the indices $w$ and $z$), and such that the resulting drawing is simple.
	
	First, for each vertex $w\in V$ we place the gadget $\mathcal{V}^{(w)}$ in its position, so all the %disjoint 
	copies of $\mathcal{V}$ lie (equidistant) on a horizontal line and do not cross each other.
	For the edges of $G$, since the drawing in Fig.~\ref{fig:P2} is not symmetric, 
	we choose an orientation.
We orient all the edges in the 1-page book drawing $D(G)$ of $G$ from left to right.	
	We start by inserting the corresponding $\mathcal{E}$ gadgets
	from left to right
	and from the shortest edges in $D(G)$ to the longest. 
	For an edge $wz$, the intersections of the gadget $\mathcal{E}^{(wz)}$: 
	(i) with the edges $u^{(w)}a^{(w)}$ and $u^{(w)}b^{(w)}$  are placed to the left of all the previous intersections of other edge gadgets with that edge; 
	(ii) with the edge $v^{(w)}b^{(w)}$ are placed to the right of all the previous intersections with that edge;
	(iii) with the edge $v^{(w)}a^{(w)}$ are placed to the right of previous intersections with gadgets $\mathcal{E}^{(wt)}$
	and to the left of  previous intersections with gadgets $\mathcal{E}^{(tw)}$;
	(iv)  with the edges $u^{(z)}a^{(z)}$ and $u^{(z)}b^{(z)}$ are placed to the left of the previous intersections with gadgets $\mathcal{E}^{(tz)}$; 
	(v) with the edge $v^{(z)}b^{(z)}$ are placed to the left of all previous intersections; and
	(vi) with the edge $v^{(z)}a^{(z)}$ are placed to the left of all previous intersections with gadgets $\mathcal{E}^{(tz)}$; 
	see Fig.~\ref{fig:K4}.
	
	Moreover, the arcs of an edge gadget connecting two vertex gadgets are drawn 
	either completely in the upper half-plane or completely in the lower one with respect to the horizontal line 
	and two arcs cross at most twice.  
	If they are part of edges in edge gadgets connected to the same vertex gadget, 
	they might cross locally around this vertex gadget. 
	However, after this crossing, they follow the circular-arc routing induced by $D(G)$ (or its mirror image) and do not cross again. 
	Otherwise, with respect to each other, they follow the circular-arc routing induced by $D(G)$ (or its mirror image) 
	and thus cross at most once; see Fig.~\ref{fig:K4}.
	
	Since in neither of the gadgets two incident edges cross, 
	and edges of different gadgets are vertex-disjoint, 
	we only have to worry about edges from different gadgets crossing more than once.
	By construction, no edge in an edge gadget intersects more than once with an edge in a vertex gadget.
	Thus, it remains to show that any two edges %$e_1$ and $e_2$
	from two distinct edge gadgets cross at most once. 
	Such two edges are included in a subgraph $H$ of $G$ with exactly four vertices. The drawing induced by the four vertex gadgets and the at most six edge gadgets is homeomorphic to a subdrawing of the drawing in Fig.~\ref{fig:K4}. 
	It is routine to check that it is a simple drawing, and thus any two edges  cross at most once. 
\end{proof}

\subsection{Reduction from maximum independent set}
\label{subsection:reduction_from_max_inde}

\begin{proof}[of Theorem~\ref{thm:max}] 
Given a graph $G=(V,E)$,  
we reduce the problem of deciding whether $G$ has an independent set of size $k$
to the problem of deciding whether 
the simple drawing $D'(G')$ constructed as in Lemma~\ref{lem:simultaneous_d2}
with a candidate set of edges $M$ 
(where $M = \{u^{(w)} v^{(w)}: w\in V\} \cup \{u^{(e)} v^{(e)}: e \in E\}$) 
can be extended with a set of edges $M'\subseteq M$ of  
cardinality $|M'|= |E| + k$.

	To show the correctness of the (polynomial) reduction, 
	we first show that if 
	$G$ has an independent set $I$ of size $k$, then we can extend $D'(G')$ with a set $M'$ of $|E| + k$ edges of $M$. 
	Clearly, the $k$ edges $\{u^{(w)}v^{(w)} : w \in I\}$ can be inserted into $D'(G')$ by the construction of the drawing. 
	Since $I$ is an independent set, each edge has at most one endpoint in $I$.
	Thus, in every edge gadget $\mathcal{E}^{(e)}$ at most one of the two possibilities 
	for inserting the edge $u^{(e)}v^{(e)}$ is blocked by the previous $k$ inserted edges.
	We therefore can also insert the $|E|$ edges $\{u^{(e)}v^{(e)}: e \in E\}$.
	
	Conversely, let $M'\subset M$ be a set of $|E| + k$ edges
	can be inserted into $D'(G')$ and that contains the minimum number of $uv$ edges from vertex gadgets.
	If the set of vertices $\{w \in V: u^{(w)}v^{(w)} \in M'\}$
	is an independent set of $G$, then we are done, 
	since at most $|E|$ edges of $M'$ can be from edge gadgets,
	so at least $k$ are from vertex gadgets.
	Otherwise, there are two edges $u^{(w)}v^{(w)}$ and $u^{(z)}v^{(z)}$ in $M'$ 
	such that the corresponding vertices $w,z\in V$ are connected by the edge
	$wz \in E$. 
	By the construction of $D'(G')$ this implies that the edge $u^{(wz)}v^{(wz)}$
	belongs to $M$, but it cannot be in $M'$.
	By removing the edge $u^{(w)}v^{(w)}$ and inserting the edge $u^{(wz)}v^{(wz)}$ into $D'(G')$,
	we obtain another valid extension with the same cardinality
	but one less $uv$ edge from a vertex gadget. 
	This contradicts our assumption. 
\end{proof}

The presented reduction can be further analyzed to show that the problem is actually \APX-hard. 
Note that the problem we are reducing from, 
maximum independent set in simple graphs, is \APX-hard~\cite{APX-MIS-3} even in graphs with vertex degree at most three.  %%%FULL
Our reduction can be shown to be an \L-reduction inthat case, implying a \PTAS-reduction. 
This shows the following result (details are provided in Appendix~\ref{ap:apx}): 

\begin{corollary}
	\label{cor:apx-hard}
	Given a simple drawing $D(G)$ of a graph $G$ and a set of edges $M$ of the complement of $G$, 
	finding the size of the largest subset of edges from $M$ extending $D(G)$ is \APX-hard.
\end{corollary}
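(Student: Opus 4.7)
The plan is to verify that the reduction from the proof of Theorem~\ref{thm:max}, when restricted to inputs of maximum degree three, constitutes an L-reduction from maximum independent set. Since maximum independent set on such graphs is known to be \APX-hard~\cite{APX-MIS-3}, and L-reductions are a special case of \PTAS-reductions, this will immediately imply Corollary~\ref{cor:apx-hard}.

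First I would identify the two optima that need to be related. For a graph $G=(V,E)$ of maximum degree three, let $\alpha^\ast(G)$ denote the size of a maximum independent set. The candidate set is $M=\{u^{(w)}v^{(w)}:w\in V\}\cup\{u^{(e)}v^{(e)}:e\in E\}$, and from the proof of Theorem~\ref{thm:max} the maximum number of edges of $M$ that can be inserted into $D'(G')$ equals $|E|+\alpha^\ast(G)$. For the first L-reduction condition I would invoke two elementary facts about degree-three graphs: $|E|\le \tfrac32|V|$, and any such graph admits an independent set of size at least $|V|/4$ (greedily pick a vertex, delete it and its neighbours). Together these yield $|E|\le 6\alpha^\ast(G)$, so
\[
\mathrm{opt}(D'(G'),M)\;=\;|E|+\alpha^\ast(G)\;\le\;7\alpha^\ast(G),
\]
giving condition~(i) with constant $\alpha=7$.

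For the second condition I would upgrade the swap argument of Theorem~\ref{thm:max} to a polynomial-time map $g$ that turns any feasible extension $M'\subseteq M$ into an independent set $I(M')$ of $G$ with $|I(M')|\ge |M'|-|E|$. Starting from $M'$, as long as two vertex-gadget edges $u^{(w)}v^{(w)},u^{(z)}v^{(z)}\in M'$ correspond to adjacent vertices $w,z\in V$, the proof of Theorem~\ref{thm:max} shows that $u^{(wz)}v^{(wz)}\notin M'$, so we may delete $u^{(w)}v^{(w)}$ and insert $u^{(wz)}v^{(wz)}$; the cardinality is preserved and the number of vertex-gadget edges strictly decreases, so the process terminates in at most $|V|$ steps. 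The resulting set has $|M'|-\ell$ vertex-gadget edges for some $\ell\le|E|$, and the corresponding vertices form an independent set $I(M')$ of size at least $|M'|-|E|$. Hence
\[
\alpha^\ast(G)-|I(M')|\;\le\;\bigl(|E|+\alpha^\ast(G)\bigr)-|M'|\;=\;\mathrm{opt}(D'(G'),M)-|M'|,
\]
which is condition~(ii) with constant $\beta=1$.

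The only real obstacle is packaging the swap argument as an honest polynomial-time procedure and giving a clean termination bound; everything else is routine bookkeeping. Once both inequalities are established, the pair $(f,g)$ with $(\alpha,\beta)=(7,1)$ is an L-reduction from \textsc{MIS} on degree-three graphs to our maximization problem, and \APX-hardness follows.
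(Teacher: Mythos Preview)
Your proposal is correct and follows essentially the same route as the paper: both verify that the reduction of Theorem~\ref{thm:max} is an L-reduction from MIS on degree-three graphs with constants $7$ and $1$, using $|E|\le\tfrac32|V|$ together with $\alpha^\ast(G)\ge|V|/4$ for the first bound and the swap argument for the second. The only cosmetic difference is that the paper cites the Caro--Wei inequality (equivalently Tur\'an) for $\alpha^\ast(G)\ge|V|/4$, whereas you obtain the same bound by the elementary greedy deletion argument.
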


\section{Inserting one edge in a simple drawing}
\label{sec:adding_one_edge}

In this section, we consider the problem of extending a simple drawing of a graph by inserting exactly one edge $uv$ for a given pair of  non-adjacent vertices $u$ and $v$.
We start by rephrasing our problem as a problem of finding a certain path in
 the dual of the planarization of the drawing. 

%We begin by considering the dual problem.
Given a simple drawing $D(G)$ of a graph $G=(V,E)$, 
the {\em dual graph} $G^*(D)$ has a vertex corresponding to each cell of $D(G)$ 
(where a cell is a component of $\mathbb{R}^2\setminus D(G)$). 
There is an edge between two vertices if and only if 
the corresponding cells are separated by the same segment of an edge in $D(G)$.  
Notice that $G^*(D)$ can also be defined as the plane dual of the planarization  of $D(G)$, 
where crossings are replaced by vertices so that the resulting drawing is plane. 

We define a coloring $\chi$ of the edges of $G^*(D)$ by labeling the edges of the original graph $G$ using numbers from $1$ to $|E|$, and assigning 
 to each edge of $G^*(D)$ 
the label of the edge  that separates the cells corresponding to its incident vertices. 
Given two vertices $u,v\in V$, 
let $G^{*}(D,\{u,v\})$ be the subgraph of $G^*(D)$ obtained by removing the edges 
corresponding to connections between cells separated by an (arc of an) edge incident to $u$ or to $v$, 
and let $\chi'$ be the coloring of the edges coinciding with $\chi$ in every edge. 
The problem of extending $D(G)$ with one edge $uv$ is equivalent to
the existence of a heterochromatic path in $G^{*}(D,\{u,v\})$ (i.e., no color is repeated) with respect to $\chi$, 
between two vertices that corresponds to a cell incident to $u$ and a cell incident to $v$, respectively.

We remark that, from this dual perspective, 
it is clear that the problem of deciding if a simple drawing can be extended with a given set of edges is in \NP. 

The general problem of finding an heterochromatic path in an edge-colored graph is \NP-complete,  
even when each color is assigned to at most two edges. 
The proof can be found in Appendix~\ref{sec:hetero}.

\begin{theorem}
\label{thm:np_heterochromatic}
Given a (multi)graph $G$ with an edge-coloring $\chi$ 
and two vertices $x$ and $y$, 
it is NP-complete to decide whether there is a heterochromatic path in $G$ from $x$ to $y$,  
even when each color is assigned to at most two edges.
\end{theorem}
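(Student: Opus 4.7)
Membership in NP is routine: a candidate $x$-$y$ path can be checked in polynomial time for simplicity and color-distinctness. For NP-hardness in the restricted case, my plan is to reduce from 3-SAT. Given a 3-CNF formula $\phi$ with variables $x_1,\ldots,x_n$ and clauses $C_1,\ldots,C_m$, I would construct a multigraph $G$ with distinguished vertices $x,y$ arranged as a linear chain of gadgets: $x$, followed by $n$ \emph{variable gadgets} and then $m$ \emph{clause gadgets}, ending at $y$. Consecutive gadgets would share only their boundary vertices, so every simple $x$-$y$ path is forced to traverse all $n+m$ gadgets in sequence.

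The variable gadget for $x_i$ consists of two internally disjoint rails from an entry vertex $a_i$ to an exit vertex $b_i$, each of length $k_i$, where $k_i$ is the number of occurrences of $x_i$ in $\phi$. The edges of the top rail receive fresh colors $T_{i,1},\ldots,T_{i,k_i}$; those of the bottom rail receive $F_{i,1},\ldots,F_{i,k_i}$. Since the rails share only $a_i$ and $b_i$, every simple $a_i$-$b_i$ subpath uses exactly one rail, encoding the consistent choice $x_i=\texttt{true}$ (top) or $x_i=\texttt{false}$ (bottom) applied uniformly to all $k_i$ occurrences. The clause gadget for $C_j$ consists of three parallel edges between an entry vertex $p_j$ and an exit vertex $q_j$, one per literal: the edge representing the $r$-th positive occurrence of $x_i$ receives color $F_{i,r}$, and the edge representing an $r$-th negative occurrence of $x_i$ receives color $T_{i,r}$. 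A direct count then shows that every color appears on at most two edges of $G$ (one rail edge, and possibly one clause edge of matching index and opposite polarity).

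For correctness, I would verify both directions. Given a satisfying assignment, build the path by traversing each variable gadget on the rail matching its value and, in each clause gadget, taking the edge of any satisfied literal; that edge's color was not consumed on any rail, so the resulting path is heterochromatic. Conversely, any heterochromatic $x$-$y$ path selects exactly one rail per variable gadget, defining a truth assignment, and exactly one edge per clause gadget; since the color of that clause edge was not used on the corresponding rail, the literal it represents must be true under this assignment, so every clause is satisfied.

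The main obstacle I anticipate is enforcing that all occurrences of a single variable agree on a truth value while still keeping every color's multiplicity at most two. The two-rail gadget is what makes this work: without any crossover edges between the rails, the truth-value commitment made at $a_i$ propagates automatically through all $k_i$ occurrences, so each rail edge needs to be paired with only one clause edge of matching index, and no auxiliary colors of higher multiplicity are required.
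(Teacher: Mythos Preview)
Your proposal is correct and follows essentially the same approach as the paper's proof: both reduce from \threesat\ by chaining together two-path variable gadgets and three-parallel-edge clause gadgets, sharing each color between one rail edge and one clause edge so that a rail choice blocks exactly the literals falsified by the corresponding truth value. The only cosmetic differences are that the paper indexes shared colors by clause and literal position (rather than by variable and occurrence number), gives the two paths lengths $1+p_i$ and $1+n_i$ (with an extra initial edge of a fresh color to keep each path nonempty) rather than making both rails length $k_i$, and places the clause gadgets before the variable gadgets in the chain---none of which affects the argument.
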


However, in  our setting the multigraph and the coloring come from a simple drawing.  
The following theorem  
shows a particular case in which we can decide in polynomial time if an edge can be inserted. 

\begin{theorem}
\label{thm:alg_one_extension}
Let $D(G)$ be a simple drawing of a graph $G=(V,E)$ and let $u$, $v\in V(G)$  be non-adjacent vertices. 
If $\{u,v\}$ is a dominating set for $G$, that is, every vertex in $V\setminus\{u,v\}$ is a neighbor of  $u$ or $v$, then the problem of extending $D(G)$ with the edge $uv$ can be decided in polynomial time.
\end{theorem}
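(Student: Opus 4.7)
My plan is to use the dominating-set hypothesis to force most topological decisions in drawing the new edge, reducing the problem to a polynomial number of simple feasibility checks.

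First, I reformulate the question in topological terms. Extending $D(G)$ with the edge $uv$ is equivalent to drawing a simple Jordan arc $\gamma$ from $u$ to $v$ whose interior is disjoint from every edge of $G$ incident to $u$ or $v$ and crosses every remaining (``chord'') edge at most once, without meeting any vertex of $G$ other than its endpoints. The arc must leave $u$ through one of the $\deg_G(u)$ angular sectors at $u$ delimited by the edges incident to $u$, and analogously arrive at $v$ through one of the $\deg_G(v)$ sectors at $v$. I therefore enumerate all $O(\deg(u)\deg(v))$ ordered pairs of sectors $(\sigma_u,\sigma_v)$; the enumeration is polynomial, so it suffices to decide each pair in polynomial time.

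Second, I exploit the dominating-set property. Every chord edge $e=xy$ satisfies $x,y\in N(u)\cup N(v)$. If both $x,y\in N(u)$, then by the simple-drawing axiom the three arcs $ux$, $xy$, $yu$ pairwise share only common endpoints and so form a simple closed curve $\Delta_e^u$. By the Jordan curve theorem, $\Delta_e^u$ separates the plane into two regions, exactly one of which contains $v$. Because $\gamma$ cannot intersect $ux$ or $yu$, its parity of crossings with $\Delta_e^u$ is determined solely by whether the sector $\sigma_u$ opens into the same region as $v$; moreover, since only $xy$ among the three arcs is crossable, $\gamma$ is either forced to cross $xy$ exactly once or forbidden to cross it at all. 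A symmetric argument, using $\sigma_v$ in place of $\sigma_u$, handles chord edges with both endpoints in $N(v)$. Thus, for each fixed pair $(\sigma_u,\sigma_v)$, the crossing behaviour of $\gamma$ on every chord edge that is of ``type $U$'' or ``type $V$'' is completely determined, and any inconsistency (a forced crossing that is also forbidden) immediately rejects this pair.

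Finally, the only chord edges on which crossings are not yet pinned down are the ``mixed'' edges $e=xy$ with $x\in N(u)\setminus N(v)$ and $y\in N(v)\setminus N(u)$. For each surviving sector pair, I encode the mandatory and forbidden crossings from the previous step as constraints in the dual graph $G^{*}(D,\{u,v\})$ with the coloring $\chi'$, and search for a connecting path in this restricted dual from the cell specified by $\sigma_u$ to one specified by $\sigma_v$, crossing each mixed chord edge at most once. I expect this last step to reduce to a polynomial-time BFS-style reachability query, because after the forced crossings have been committed, the homotopy class of the remaining fragment of $\gamma$ is essentially pinned down and only local choices across mixed chord edges remain. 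The main obstacle will be exactly this last point: showing that mixed chord edges cannot combine to re-introduce the full generality of the NP-hard heterochromatic path problem of Theorem~\ref{thm:np_heterochromatic}. I plan to establish this by proving that the subdrawing already committed to by the forced crossings acts as a planar skeleton along which any valid $\gamma$ must run, leaving the decisions on mixed chord edges independent and locally checkable.
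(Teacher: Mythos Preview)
Your first two steps are sound and indeed parallel the paper's own reduction: enumerating the $O(\deg(u)\deg(v))$ sector pairs corresponds to the paper's choice of where the boundary curve touches $u$ and $v$, and your triangle/parity argument for ``type~$U$'' and ``type~$V$'' edges is essentially the observation that an arc with both ends on the same hole is either separating (must be crossed once) or non-separating (must be avoided). So far so good.

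The gap is your third step, and you flag it yourself: ``I expect this last step to reduce to a polynomial-time BFS-style reachability query'' and ``I plan to establish this'' are not a proof. The mixed edges are exactly the hard case. In the paper's formulation they become arcs of type~T1 (ends on different holes), and handling them is where all the technical work sits: the paper introduces a \emph{cutting} operation that turns a two-hole instance into a one-hole instance along a T1 arc, proves that the one-hole case is decidable by a simple colour check (Lemma~\ref{lemma:one_hole}), and then proves two structural lemmas (Lemmas~\ref{lemma:extend_avoid} and~\ref{lemma:colors}) showing that for two holes one needs to branch on at most one pair of same-coloured T1 arcs before collapsing to one hole. None of this is a BFS; it is a small recursive case analysis whose polynomial bound comes from the fact that cutting destroys all T1 arcs at once. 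Your hope that the mixed-edge decisions are ``independent and locally checkable'' is false in general: two T1 arcs of the same colour that cross each other can force every valid $pq$-arc to cross both (see the configuration the paper draws for exactly this purpose), so the constraints interact globally.

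There is also a smaller issue you gloss over. Your parity argument fixes only the \emph{total} number of crossings with a type~$U$ (or type~$V$) edge $xy$. But inside the working region the drawn edge $xy$ may be split into several arc-pieces by crossings with other star edges $uw$ or $vw'$; knowing that the total is $1$ does not tell you \emph{which} piece to cross, and knowing it is $0$ still leaves the constraint that every piece be avoided. So even before reaching the mixed edges, you have not reduced to an unconstrained reachability problem. The paper handles this uniformly by working with the arc-pieces (its set~$\mathcal{J}$) rather than with the original edges, and the colour constraint ``at most one arc of each colour'' is what carries your parity information forward.
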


An algorithm proving this result can be found in Appendix~\ref{ap:one_edge}. 
We sketch here the idea. 
\begin{figure}[tb]
	\centering
	\includegraphics[page=1]{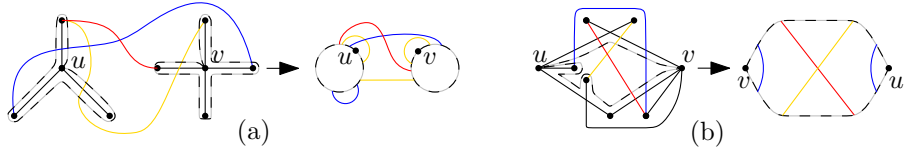}
	\caption{Reduction to the path problem with holes.}
	\label{fig:pph}
\end{figure}
The first step is to reduce our problem 
to the {\em path problem with holes} (PPH): 
Given two open disks $h_1, h_2\subseteq \mathbb{R}^2$ whose closures (called holes) are either disjoint or they coincide $h_1=h_2$, 
a set $\mathcal{J}$ of colored Jordan curves in $\Gamma = \mathbb{R}^2\setminus (h_1\cup h_2)$, and two distinct points $p$, $q\in \Gamma\setminus \bigcup \mathcal{J}$, 
we want to decide if there is a $pq$-arc intersecting at most one arc in $\mathcal{J}$ from each color. 
If $h_1=h_2$, we say that the instance of the PPH has one hole. 

Consider the subdrawing $D_{u,v}$ of $D(G)$ 
consisting of $u$, $v$, all vertices adjacent to them and all the edges incident to $u$ or to $v$. 
Fig.~\ref{fig:pph} illustrates the reduction from the problem of inserting $uv$ in $D_{u,v}$ to the PPH. 
Based on our reduction, one can make further assumptions on any instance $(\Gamma, \mathcal{J}, p, q)$ that we consider of the PPH problem:
(i) for every two different arcs $\alpha_1$, $\alpha_2\in \mathcal{J}$, $|\alpha_1\cap \alpha_2|\leq 1$;
(ii) pairs of arcs in $\mathcal{J}$ with the same color do not cross; 
and (iii) each arc in $\mathcal{J}$ has both ends on the union of the boundaries of the holes $\partial h_1\cup \partial h_2$.

Given an instance $(\Gamma,\mathcal{J},p,q)$ of the PPH, 
an arc $\alpha\in J$ is {\em separating} if $p$ and $q$ are on different connected components of $\Gamma \setminus \alpha$. 
We divide the arcs in  $\mathcal{J}$ into three different types: 
(T1) arcs with ends on different holes; 
(T2)  separating arcs with ends on the same hole; 
and (T3) non-separating arcs with ends on the same hole. 

Arcs of type T3 can be preprocessed with the operation that we denote \emph{enlarging one hole using $\alpha$}, as showed in Fig.~\ref{fig:two_op} (a). 
\begin{figure}[tb]
	\centering
	\includegraphics[]{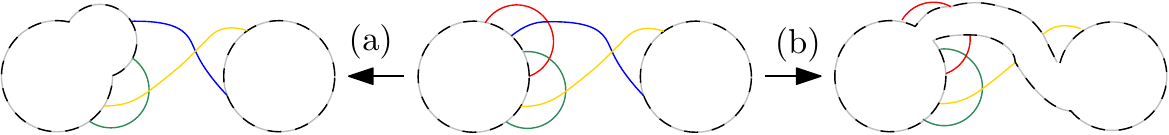}
	\caption{Transforming an instance of PPH: (a) enlarging a hole along an arc and (b) cutting through an arc.}
	\label{fig:two_op}
\end{figure} 
Once all the arcs in $\mathcal{J}$ are of type either T1 or T2,  
the algorithm determines the existence of a feasible $pq$-arc based on the colors of the arcs in $\mathcal{J}$. 
If all the arcs have different colors we have a solution. 
Otherwise we consider two arcs of the same color. 
If both arcs are of type T2, then there is no valid $pq$-arc and our algorithm stops.  
For handling the cases in which at least one of these arcs is of type T1, 
the idea is to try to find a solution that does not cross it. 
To do so, we use the operation denoted \emph{cutting through an arc} 
illustrated in Fig.\ref{fig:two_op} (b). 
If of the two arcs of the same color is of type T1 and the other is of type T2, 
there is a valid $pq$-arc if and only if there is a valid $pq$-arc after cutting through the T1 arc. 
Otherwise, if both are of type T1, 
there is a solution if and only if either there is a solution after cutting through the first arc 
or there is a solution after cutting through the second one. 
Note that the operation of cutting through an arc produces  
an instance with only one from an instance with two holes. 
This guarantees that the algorithm runs in polynomial time. 

\section{Conclusions}
In this paper we showed that given a simple drawing $D(G)$ of a graph $G=(V,E)$ 
and a prescribed set $M$ of edges of the complement of $G$, 
it is \NP-complete to decide whether $M$ can be inserted into $D(G)$. 
Moreover, it is \APX-hard to find the maximum subset of edges in $M$ that can be inserted into $D(G)$.  
We remark that the reduction showing \APX-hardness cannot replace the one 
showing \NP-hardness of inserting the whole set $M$ of edges, 
since, by construction, in the \APX-hardness reduction some of the edges in $M$ cannot be inserted.  

Focusing on the case $|M|=1$, 
we showed that a generalization of this problem is \NP-complete 
and 
we found sufficient conditions guaranteeing a polynomial-time decision. 
We hope that this paves the way to solve the following question.

\setcounter{problem}{0}
\begin{problem}
		Given a simple drawing $D(G)$ of a graph $G$ and a pair $u$, $v$ of non-adjacent edges, what is the computational complexity of deciding whether we can insert $uv$ into $D(G)$ 
		such that the result is a simple drawing? 
\end{problem}

\paragraph{Acknowledgments}

We want to thank the anonymous reviewers for their insightful comments. 

\bibliographystyle{splncs04}
\bibliography{refs_extending}

\appendix

\newpage

\section{Proof of Corollary~\ref{cor:apx-hard}}
\label{ap:apx}

\begin{proof} 
Since the MIS problem for graphs
with vertex degree at most three is \APX-hard~\cite{APX-MIS-3}, 
it suffices to show that the reduction proving Theorem~\ref{thm:max} is an  \L-reduction. 
This type of reductions was introduced by Papadimitriou and Yannakakis~\cite{papa91}. 
In order to provide a formal definition, we present some notation.

Given an \NP-optimization problem $P$, 
we denote by $I(P)$ the set of instances of $P$. 
For example, the set of all graphs is $I(\text{MIS})$. 
The \NP-optimization problem $P$ has associated an objective function $cost_P$ that we would like to either maximize or minimize (in our case maximize). 
For each instance $x\in I(P)$ 
we denote by $\text{opt}_P(x)$ the optimal value of a feasible solution with respect to $cost_P$. 
(For the MIS problem, the feasible solutions are the independent sets of the instance graph and $cost$ measures the size of a set.)

	Let $A$ and $B$ be a pair of \NP-optimization problems. 
	There is an {\em $L$-reduction} from $A$ to $B$ if there are polynomial-time computable functions $f$ and $g$ and positive constants $c_1$ and $c_2$ such that,
	\begin{enumerate}[(i)]
		\item $f$ maps every instance $x\in I(A)$ to an instance $x'=f(x)\in I(B)$;
		\item $g$ maps every feasible solution $y'$ of $x'=f(x)$ to a feasible solution $y=g(x,y')$ of $x\in I(A)$; 
		\item for every instance $x\in I(A)$, $\text{opt}_B(f(x))\leq c_1\cdot\text{opt}_A(x)$; and 
		\item for every instance $x\in I(A)$ and for every feasible solution $y'$ of $x'=f(x)$,  
		$|\text{opt}_A(x)-cost_A(y)|\leq c_2\cdot|\text{opt}_B(x')-cost_B(y')|$, 
		where $y=g(x,y')$.
	\end{enumerate}
	Given a simple graph $G=(V,E)$, 
	we construct a simple drawing $D'(G')$ as in Lemma \ref{lem:simultaneous_d2}.  This construction plays the role of $f$ in (i). 
	We denote by $M$ the candidate set of edges consisting of all the $uv$ edges of the gadgets used to construct $D'(G')$, that is, $M = \{u^{(w)} v^{(w)}: w\in V\} \cup \{u^{(e)} v^{(e)}: e \in E\}$. 
	Then, as argued in the proof of Theorem~\ref{thm:max}, 
	$G$ has an independent set of size $k$ if and only if 
	we can insert $|E| + k$ from $M$ into $D'(G')$. 
	Moreover, suppose that $M'\subseteq M$ is a subset of $|E|+k$ edges that can be inserted into $D'(G')$. 
	Using the ideas of the proof of Theorem~\ref{thm:max}, 
	if the set of vertices $\{w \in V: u^{(w)}v^{(w)} \in M'\}$
	is an independent set of $G$, then they are an independent set of $G$ of size $k$. 
	Otherwise, there are two edges $u^{(w)}v^{(w)}$ and $u^{(z)}v^{(z)}$ in $M'$ 
	and then the edge $u^{(wz)}v^{(wz)}$ cannot be in $M'$.
	By removing the edge $u^{(w)}v^{(w)}$ and inserting the edge $u^{(wz)}v^{(wz)}$ into $D'(G')$,
	we obtain another set of candidate edges that can be inserted with the same cardinality 
	but with one less $uv$ edge from a vertex gadget. 
	Iterating this process we obtain a subset of  $|E| + k$ edges $M''\subseteq M$ 
	such that the set of $k$ vertices $\{w \in V: u^{(w)}v^{(w)} \in M''\}$ is an independent set of $G$. 
	This defines the function $g$ mapping a feasible subset $M'\subseteq M$ of at least $|E|$ edges that we can insert into $D'(G')$ to an  
	independent set in $G$ of size $|M'|-|E|$. 
	We extend $g$, so that every feasible subset $M'\subseteq M$ with $|M'|\leq |E|$ is mapped to the empty set. 
	This proves (ii). 
.
	
	Let $\alpha=\alpha(G)$ be the size of the maximum independent set of $G$.  	
	We now show (iii). 
	First, observe that the handshaking lemma and the fact that the vertex degrees in $G$ are at most three imply $|E| \leq 3/2 |V|$. 
	We now bound $|V|$ in terms of $\alpha(G)$. 
	Wei~\cite{wei} and Caro~\cite{caro} independently showed that $\alpha(G)\geq \sum_{v\in V}1/(d(v) + 1)$, where $d(v)$ is the degree of vertex $v$. 
	Thus, in our case $|V|\leq 4\alpha$. 
	(This bound also follows from Tur\'an's theorem~\cite{turan1941external}.)
	Plugging this into the equation obtained by the handshaking lemma we get 
	$|E|\leq 3/2|V|\leq 6\alpha$. 
	Since an optimal solution for the problem of inserting the largest subset of candidate edges into $D'(G')$ has size $\alpha +|E|\leq 7\alpha$, we have proven (iii) for a constant $c_1 = 7$.

	Finally, we show (iv) for the constant $c_2=1$.
	Let  $M'\subseteq M$ be a set of $l$ edges that can be inserted into $D'(G')$. 
	If $l\leq |E|$, then $g$ maps $M'$ to the empty set and 
	we have that 
	$\alpha-0\leq |E|+\alpha-l$.  
	% =\text{opt}_B(f(x))-c(y)$.
	Otherwise, if $l=|E|+l'$ for $l'\geq 0$ we have that 
	$\alpha - l' = |E| + \alpha - |E| - l'$. 
	Thus, the absolute errors are in the worst case the same, as desired. 
\end{proof}

\section{Proof of Theorem~\ref{thm:np_heterochromatic}}
\label{sec:hetero}
%\NPheterochromatic*
\begin{proof}
	We reduce from \threesat. 
	Given a formula in 3-CNF with $n$ variables $x_1$, $\ldots$, $x_n$ and $m$ clauses $C_1,\ldots, C_m$,
	we construct an edge-colored (multi)graph $G$ as the one depicted in Fig.~\ref{fig:3sat}. 
	For each clause $C_j$, we construct a subgraph that consists of two vertices $sc_j$ and $tc_j$ joined by three different edges 
	with colors $j_1$, $j_2$, and $j_3$, respectively, 
	corresponding to the (without loss of generality) three literals in the clause.

	For each variable $x_i$, we construct a subgraph that consists of 
	two vertices $sx_i$ and $tx_i$, and two disjoint paths connecting them. 
	The first path has its initial edge colored with color $i$, 
	while the rest of the edges correspond to the literals $x_i$ in the clauses. 
	The second path also has its initial edge colored $i$, and the rest of the edges correspond to the literals $\neg x_i$ in the clauses. 
	If an edge corresponds to the $k$-th literal of the clause $C_j$ we assign color $j_k$ to this edge.

	We now join all the clause subgraphs by identifying $tc_j$ with $sc_{j+1}$, for $j=1,\ldots,m-1$. 
	We also join all the variable subgraphs by identifying $tx_i$ with $sx_{i+1}$, for $i=1,\ldots,n-1$. 
	Finally, we identify $tc_m$ with $sx_1$.
	
	It is easy to see that there is a heterochromatic path in $G$ from $sx_1$ to $tc_m$ if an only if the \threesat\ instance is satisfiable.
	Finally, notice that we can easily modify the reduction to construct a simple graph instead of a multigraph by  
	subdividing edges and using new colors.
\end{proof}

\begin{figure}[tb]
	\centering
	\includegraphics[page=2]{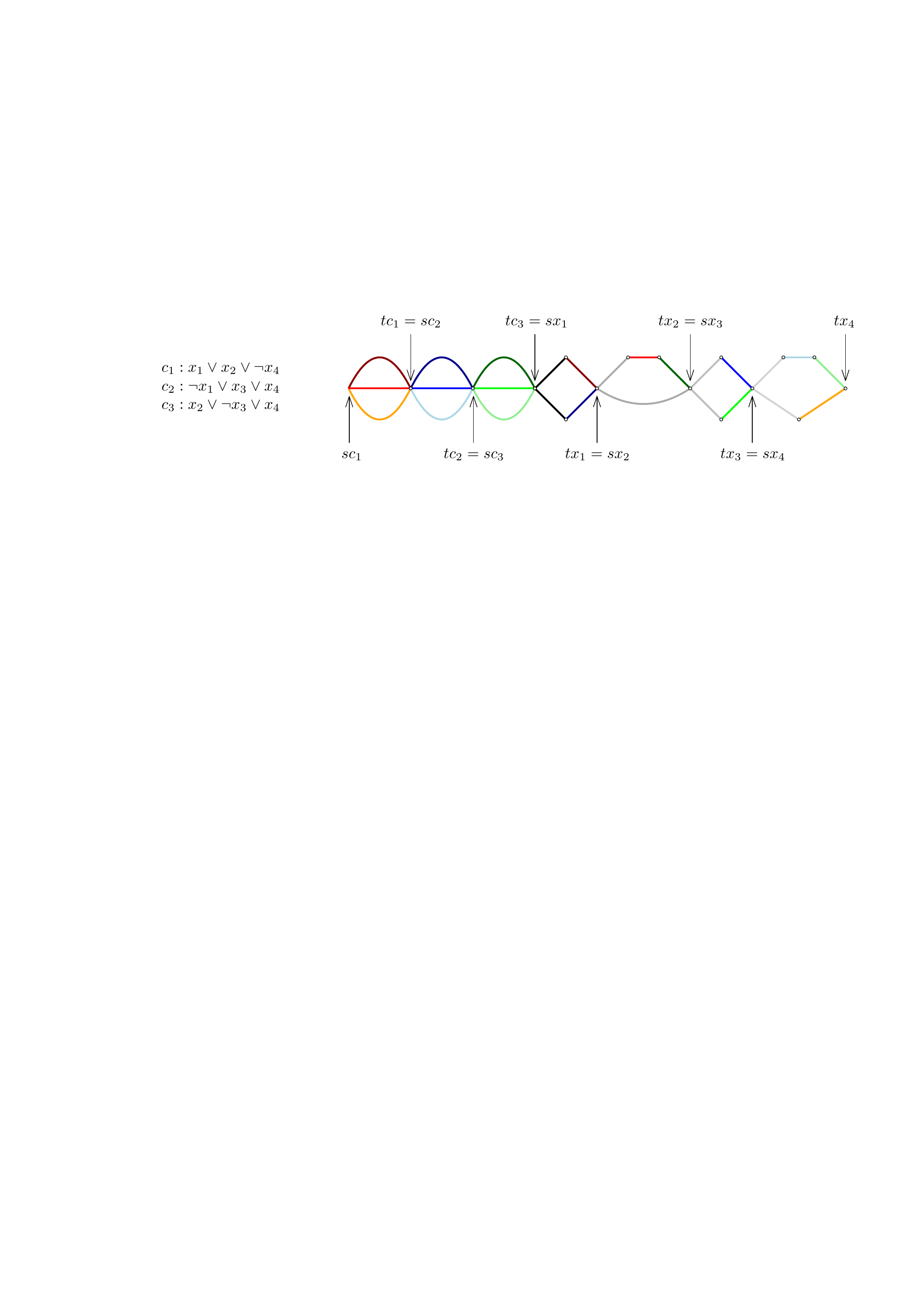}
	\caption{Reduction from \threesat: example with four variables and three clauses.}
	\label{fig:3sat}
\end{figure}

\section{Full proof of Theorem~\ref{thm:alg_one_extension}}
\label{ap:one_edge}

This section is devoted to prove Theorem~\ref{thm:alg_one_extension}. 
The first step is to reduce the problem of inserting the edge $uv$ to the problem of finding a valid path crossing some colored arcs at most once in a plane with some forbidden regions (holes). This new problem 
has  the advantage of being a more suitable ground for inductive proofs. 
The main ingredients needed in our algorithm are a series of lemmas 
describing sufficient conditions for which this problem has a solution. 

For an integer $k\geq 1$, a {\em plane with $k$ holes} 
is a set $\Gamma\subseteq \mathbb{R}^2$ obtained from 
considering $k$ disjoint simple closed curves in $\mathbb{R}^2$, all bounding a common cell, 
and removing for each curve $C$ the cell bounded by $C$ that is disjoint from the rest of the curves. If $k=1$, then only one side of $C$ is removed. 
The closure of each removed cell is a {\em hole of $\Gamma$}. 

\paragraph{Path problem with holes (PPH).} 
Given a plane with holes $\Gamma$ and a set of colored Jordan arcs $\mathcal{J}$ drawn in $\Gamma$, 
the path problem with holes asks whether there is a Jordan arc connecting two points $p$, $q\in \Gamma\setminus \mathcal{J}$, called \emph{terminals}, that 
crosses at most one arc in $\mathcal{J}$ of each color.
If such a $pq$-arc exists, then it is a {\em valid $pq$-arc} for the instance $(\Gamma, \mathcal{J},p,q)$. 

We assume that every instance of the path problem with holes that we  consider meets  the following properties:
\begin{enumerate}[(i)]
	\item Every two arcs of $\mathcal{J}$ share at most one point.
	%, and each common point is either a crossing or a common endpoint. 
	\item Pairs of distinct arcs in $\mathcal{J}$  having the
	same color are disjoint. 
	\item Each arc in $\mathcal{J}$ starts and ends on the boundary of $\Gamma$, that is, no arc has an endpoint in the interior of $\Gamma$.
\end{enumerate}

\paragraph{Reduction.}
Let $D(G)$ be a drawing of  a graph $G$, and let $\{u,v\}$ be a dominating set of vertices in $G$ such that $uv$ is an edge of $\overline{G}$. 
We now reduce
the problem of deciding whether $uv$ can be inserted into $D(G)$
to the path problem with at most two holes. 

If $G'$ is a subgraph of $G$, then we denote by $D[G']$ the subdrawing of $D$ induced by the vertices and edges  of $G'$.
In a slight abuse of notation, if $G'$ consists only of a vertex $v$ or of an edge $e=uv$, then we will write $D[v]$ and $D[e]$ (or $D[uv]$), respectively.

For a vertex $v$ of $G$, the \emph{star of $v$} consists of $v$, its adjacent vertices, and its incident edges. 
Let $S_u$ and $S_v$ be the subdrawings of $D$ induced by the stars of $u$ and $v$, respectively. 
Moreover, let $H$ be the subgraph of $G$ that is the union of the stars of $u$ and $v$. 
Then, $S_u$ and $S_v$ are plane stars whose union is $D[H]$. 
If an extension with $uv$ exists, then the arc connecting $u$ and $v$ representing the edge $uv$ 
cannot cross any of those edges and must lie in the closure of a cell $F$ of $D[H]$ with $u$ and $v$ on its boundary.
Thus, our problem reduces to testing the existence of a valid $uv$-arc in each cell $F$ of $D[H]$ with both $u$ and $v$ on its boundary.

We can assume without loss of generality that 
$u$ and $v$ are incident to at least one edge 
by maybe inserting small segments incident to them. 
Let $F$ be a cell of $D[H]$ with both $u$ and $v$ on its boundary. 
Notice that it might be bounded or unbounded. 
Moreover, the part of $D[H]$ that is in the closure of $F$ 
can be connected of disconnected.  
If it is connected, we consider a simple closed curve $C$ in the interior of $F$, closely following 
the part of $D[H]$ that is in the closure of $F$. 
We slightly modify $C$ so that, at a certain occurrence of $u$ and of $v$ on $\partial F$, 
the curve $C$ touches $\partial F$; see the dashed curve in Fig.~\ref{fig:pph} (b).
In our reduction we consider all possible modifications of $C$, differing on where we decide to make $C$ touch $u$ and $v$. 
The number of possible resulting curves is at most the degree of $u$ times the degree of $v$. 
We define $C'=C$. 

If the part of $D[H]$ that is in the closure of $F$ is not connected 
it must consist of two connected components containing $u$ and $v$, respectively. 
We consider two simple curves $C$ and $C'$ in the interior of $F$ each one closely following 
one of these connected components. 
As before, we slightly modify the curves so that, at a certain occurrence of $u$ and of $v$ on $\partial F$, 
they touch $\partial F$; see the dashed curves in Fig.~\ref{fig:pph} (a).

In both cases, we consider the inside of the curves $C$ and $C'$ 
to be the regions bounded by them and such that the union of their closures contains $S_u\cup S_v$. 
%ing the interior of the edges of $S_u$ and/or $S_v$ that they were closely surrounding. 
Let $\Gamma$ be the closure of the region consisting of $F$ with the inside of the curves $C$ and $C'$ removed. 
%By possibly changing the unbounded cell, 
Then, $\Gamma$ is a plane with at most two holes 
(the closures of the inside of the curves $C$ and $C'$). 
%Since our problems are combinatorial and do not depend on which is the unbounded cell, 
%without loss of generality, $\Gamma$ is a plane with at most two holes 
%(the closures of the inside of the curves $C$ and $C'$). 
%By possibly redefining the infinite face of $C$ and $C'$ via an inversion, we may assume that $\Gamma$ is a plane with at most two holes.

To finish our reduction, we need to identify the set of colored Jordan arcs and the two terminals in the path problem with holes. 
The set $\mathcal{J}$ is the defined as the union of the arcs of $D[e]\cap \Gamma$, for each edge $e \in E$. 
In order to assign colors to the arcs in $\mathcal{J}$, 
we first assign a different color to each edge of $G$. 
Each arc of $D[e]\cap \Gamma$ then inherits the color of $e$;  see Fig.~\ref{fig:pph}. 
%Each arc inherits the color of the corresponding edge in $G$ containing it, 
%where each edge of $G$ is assigned a different color;  see Fig.~\ref{fig:pph}. 
%To assign colors, first, we assign distinct colors to each edge $e$ of $G$, and second, we let each arc of $D[e]\cap \Gamma$ inherit the color of $e$. 
Finally, the terminals $p$ and $q$ are points in the two cells of $C\cup \mathcal{J}$ in $\Gamma$ having $D[u]$ and $D[v]$ on their boundary, respectively.

\setcounter{figure}{5} 
\begin{figure}[tb]
	\centering
	\includegraphics[page = 1]{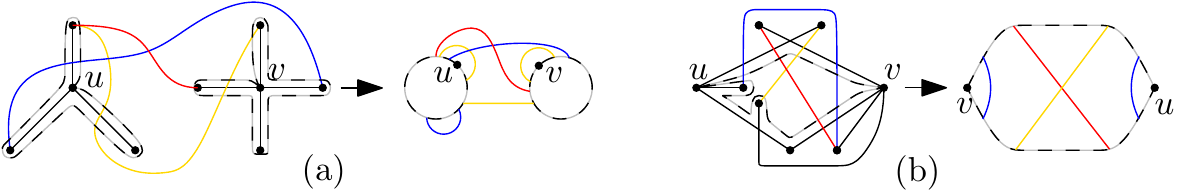}
	\caption{Reduction to the path problem with holes.}
	\label{fig:pphA}
\end{figure}

Notice that a reduction from  
the problem of inserting an edge into a simple drawing 
to the path problem with holes results in an instance satisfying properties (i) and (ii). 
Moreover, if $\{u,v\}$ is a dominating set for $G$, 
then the instance of the path problem with holes also meets property (iii).
The discussion above leads to the following statement:

\begin{observation}\label{obs:reduction}
	Let $D(G)$ be a simple drawing of a graph $G=(V,E)$ 
	and let $u$, $v\in V$  be non-adjacent vertices 
	such that $\{u,v\}$ is a dominating set for $G$. 
	The problem of deciding whether $uv$ can be inserted into $D(G)$ 
	can be reduced to the path problem with at most two holes. 
\end{observation}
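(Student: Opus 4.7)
My plan is to make rigorous the construction sketched in the paragraphs preceding the observation, organised in three stages: bounding the number of auxiliary PPH instances produced, verifying the logical equivalence with edge insertion, and checking that each produced instance satisfies the standing properties (i)--(iii).

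For the enumeration step, let $H$ be the subgraph formed by the stars $S_u\cup S_v$. Since $D[H]$ is a plane drawing with $O(|V|)$ edges, its faces can be listed in polynomial time; I keep only those cells $F$ with both $u$ and $v$ on $\partial F$. For each such $F$, the auxiliary curve $C$ (and $C'$, when the part of $D[H]$ bounding $F$ is disconnected) may be chosen to touch $\partial F$ at any occurrence of $u$ and of $v$ on that boundary, giving at most $\deg_H(u)\cdot\deg_H(v)$ choices per cell. Thus the total number of PPH instances produced is polynomial in $|V|+|E|$, and each instance is of polynomial size.

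For correctness I would prove that $uv$ can be inserted into $D(G)$ if and only if at least one of the produced PPH instances admits a valid $pq$-arc. For the forward direction, any Jordan arc $\gamma$ realising $uv$ in a simple extension must lie in a single closed cell $F$ of $D[H]$ with $u,v\in\partial F$, and it leaves $u$ (respectively enters $v$) at a specific occurrence on $\partial F$; for the PPH instance built from that $F$ and those occurrences, the restriction of $\gamma$ to $\Gamma$ is an arc from $p$ to $q$. Since the extension is simple, $\gamma$ crosses each edge of $G$ at most once, so this $pq$-arc meets at most one element of each colour class. Conversely, given a valid $pq$-arc $\rho$ in some produced instance, I concatenate $\rho$ with two short arcs inside the holes leading to $u$ and $v$; these extensions stay inside the holes, which contain $S_u\cup S_v$, and hence cross no edges of $G$ beyond those already crossed by $\rho$. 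The resulting $uv$-arc sits inside $F$ and crosses every other edge of $G$ at most once, so adding it to $D(G)$ yields a simple extension.

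Finally, I would verify the three standing properties of each instance $(\Gamma,\mathcal{J},p,q)$. Property (i) is immediate: two arcs of $\mathcal{J}$ are restrictions of two edges of $D(G)$, which share at most one point by simplicity. Property (ii) follows because arcs of the same colour come from the same edge of $G$ and are therefore pairwise disjoint as subarcs of a simple curve. The main point, and the one where I expect the (modest) obstacle, is property (iii): I must show that no arc of $\mathcal{J}$ has an endpoint in the interior of $\Gamma$, equivalently that no vertex of $G$ sits in the interior of $\Gamma$. This is precisely where the dominating-set hypothesis is used: every vertex $w\in V$ is $u$, $v$, or adjacent to one of them, hence $w\in V(H)$, so $D[w]\subseteq S_u\cup S_v$, which by construction of $C$ and $C'$ lies in the union of the closed holes. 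Therefore every vertex of $G$ is inside a hole, establishing (iii). Without domination this last step would fail, and vertices not adjacent to $u$ or $v$ could be stranded inside $\Gamma$; this is the precise obstruction that prevents the reduction from extending to arbitrary non-adjacent pairs.
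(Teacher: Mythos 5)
Your proposal follows the paper's reduction essentially verbatim: the same enumeration over cells $F$ of $D[H]$ and over the occurrences of $u$ and $v$ on $\partial F$, the same equivalence argument via restricting/extending the $uv$-arc, and the dominating-set hypothesis used exactly where the paper uses it, namely to guarantee property (iii) by placing every vertex of $G$ inside a hole. One small correction to your converse direction: the short connecting arcs from $p$ to $u$ and from $q$ to $v$ should not be described as running ``inside the holes'' --- there they could cross edges of $S_u\cup S_v$, which the new edge $uv$ must not cross at all since they are adjacent to it --- rather, they stay within the cells of $\Gamma$ whose boundaries contain the touching points of $C$ (resp.\ $C'$) with $u$ and $v$, which is precisely why $p$ and $q$ are chosen in those cells, and hence they cross nothing.
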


We now prepare the tools for solving in polynomial time an instance of the path problem with at most two holes with properties (i)--(iii). 
Apart from introducing the notation and operations used in the algorithm solving that problem, 
we will show that if all arcs are of different colors, then there is always a solution.

Given a plane with holes $\Gamma$ and a set of Jordan arcs $\mathcal{J}$ in $\Gamma$, 
a {\em cell} of $(\Gamma, \mathcal{J})$ is the interior of a component of 
$\Gamma\setminus \mathcal{J}$. 
For any arc $\alpha\in \mathcal{J}$, a {\em segment of $\alpha$} is the closure of a component of $\alpha\setminus ( \mathcal{J}\setminus \{\alpha\})$. 
If the set of arcs has one element, $\mathcal{J} = \{\alpha\}$, then,
we abuse notation by writing $(\Gamma, \alpha)$ instead of $(\Gamma, \{\alpha\})$. 
Two cells of $(\Gamma,\mathcal{J})$ are {\em adjacent} if they share a segment of an arc in $\mathcal{J}$. 
Given two points $p,q\in \Gamma$ and a Jordan arc $\alpha$, $\alpha$ is {\em $pq$-separating} if every $pq$-arc in $\Gamma$ intersects $\alpha$. 

In the following, let $(\Gamma,\mathcal{J},p,q)$ be an instance of the path problem with at most two holes and properties (i)--(iii). % 
Then, a $pq$-separating arc $\alpha\in \mathcal{J}$ has its ends on the same hole of $\Gamma$ 
and $p$ and $q$ are in different cells of $(\Gamma, \alpha)$. 
Moreover, each arc $\alpha\in \mathcal{J}$ is one of the  following three types:

\begin{enumerate}
	\item[\textbf{T1}:] $\alpha$ has its ends on two different holes of $\Gamma$;
	\item[\textbf{T2}:] $\alpha$ has its ends on the same hole of $\Gamma$ and is $pq$-separating; and
	\item[\textbf{T3}:] $\alpha$ has its ends on the same hole of $\Gamma$ and is not $pq$-separating.
\end{enumerate}

We say that two instances of a problem are \emph{equivalent} if 
the lead to the same output of a decision problem. 
The following operation shows how to transform any instance $(\Gamma,\mathcal{J},p,q)$ into another equivalent one where no arcs of type T3 occur. 

\paragraph{Enlarging a hole along an arc.}
If there is an $\alpha\in \mathcal{J}$ such that $\alpha$ is of type T3, having both its ends on the same hole $h$, 
then the operation of {\em enlarging a hole along $\alpha$} converts $(\Gamma,\mathcal{J},p,q)$ into a new instance $(\Gamma',\mathcal{J}',p,q)$, 
where $\Gamma'$ is obtained from $\Gamma$ by removing the cell of $(\Gamma, \alpha)$ disjoint from $p$ and $q$ and 
$\mathcal{J}' = \mathcal{J} \cap \Gamma'$; 
see Fig.~\ref{fig:two_op} for an illustration.

\begin{figure}[tb]
	\centering
	\includegraphics[]{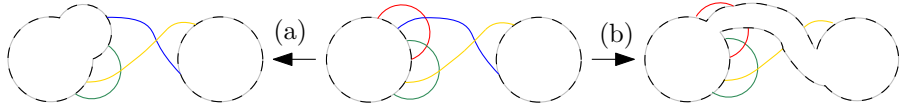}
	\caption{The two operations transforming an instance of the path problem with at most two holes: (a) enlarging a hole along an arc and (b) cutting through an arc. }
	\label{fig:two_opA}
\end{figure}

\begin{lemma}
	\label{lem:enlarging_hole}
	Let $(\Gamma,\mathcal{J},p,q)$ be an instance of the path problem with at most two holes and that meets properties (i)--(iii) 
	%and let $p, q\in \Gamma$ be the terminals. 
	and let $(\Gamma',\mathcal{J}',p,q)$ be the instance obtained from $(\Gamma,\mathcal{J},p,q)$ by enlarging a hole along an arc $\alpha$ of type T3. 
	Then, for every arc $\beta \in \mathcal{J}$ there is at most one arc $\beta' \in \mathcal{J'}$ and it is of the same type as $\beta$. 
	Thus, $|\mathcal{J}'|<|\mathcal{J}|$.
	Moreover, there is a valid $pq$-arc in $(\Gamma,\mathcal{J},p,q)$ if and only if there is a valid $pq$-arc in $(\Gamma', \mathcal{J}',p,q)$. 
\end{lemma}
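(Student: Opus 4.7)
The plan is to analyze, for each arc $\beta\in\mathcal{J}$, how it interacts with the removed cell $R$ (the cell of $(\Gamma,\alpha)$ that is disjoint from $p$ and $q$), and then to show that any valid $pq$-arc survives the removal of $R$ up to a local perturbation near $\alpha$.

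First I would establish the statement about $\mathcal{J}'$. Since $\alpha$ separates $\Gamma$ into the two open cells $R$ and $\Gamma'\setminus\alpha$, and property~(i) yields $|\beta\cap\alpha|\le 1$, a short case split on whether $\beta\cap\alpha$ is empty, is an interior crossing, or is a shared endpoint on the hole $h$ shows that $\beta\cap\Gamma'$ is either empty, the whole of $\beta$, or a single subarc of $\beta$ whose new endpoint lies on the part of $\alpha$ now in $\partial\Gamma'$. Since $\alpha$ itself is absorbed into the boundary of the enlarged hole and contributes nothing to $\mathcal{J}'$, this already gives $|\mathcal{J}'|<|\mathcal{J}|$.

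Next I would verify type preservation. Because $\alpha$ has type T3, $p$ and $q$ share a cell of $\Gamma\setminus\alpha$, and that cell lies in $\Gamma'$. If $\beta$ does not meet $\alpha$ in its interior and survives, the cell structure around $\beta$ is inherited from $\Gamma$, so the T1/T2/T3 label is immediate; in particular a separating arc remains separating in the smaller space $\Gamma'$. If $\beta$ crosses $\alpha$ at an interior point $x$, the surviving subarc $\beta'$ acquires $x$ as a new endpoint on the enlarged hole. For T1 arcs the surviving side still reaches the other hole, so $\beta'$ is again T1. For T2 and T3 arcs both endpoints of $\beta'$ now lie on the enlarged hole, and the separating/non-separating status is inherited by lifting arguments: a $pq$-arc in $\Gamma'$ can meet $\beta$ only along $\beta'$, since it cannot enter $R$; and a $pq$-arc in $\Gamma$ avoiding $\beta$ can be pushed out of $R$ to a $pq$-arc in $\Gamma'$ avoiding $\beta'$.

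Finally, the equivalence of valid $pq$-arcs follows from the same pushing principle. One direction is essentially free: a valid $pq$-arc in $(\Gamma',\mathcal{J}',p,q)$ already lies in $\Gamma$, each crossing with some $\beta'$ is a crossing with $\beta$, and $\alpha$ itself is not crossed because it lies on $\partial\Gamma'$. For the converse, given a valid $pq$-arc $\gamma$ in $(\Gamma,\mathcal{J},p,q)$, I would replace each excursion of $\gamma$ into $R$ by a detour inside a thin one-sided neighborhood of $\alpha$ on the $\Gamma'$ side. The hard step will be verifying that this rerouting never manufactures a new crossing with some $\beta'$: this reduces to the local topological observation that if $\gamma\cap R$ enters and exits $R$ at $a,b\in\alpha$ without meeting $\beta$, then $a$ and $b$ lie on the same side of the crossing point $x=\beta\cap\alpha$ along $\alpha$, so the detour between $a$ and $b$ never passes $x$ and hence never creates a fresh crossing with $\beta'$. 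Properties~(i) and~(ii) (arcs sharing at most one point and same-colored arcs being disjoint) are what make this local bookkeeping work.
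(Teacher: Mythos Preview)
Your treatment of the first part (each $\beta$ contributes at most one arc to $\mathcal{J}'$, and types are preserved) is essentially the paper's argument, only spelled out in more detail; the paper handles it in two lines by noting that $\beta\setminus\alpha$ has at most two components, at most one of which lies in $\Gamma'$.

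For the equivalence of valid $pq$-arcs, however, you are working much harder than necessary, and the paper's route is worth knowing. The paper's entire argument for the converse is one sentence: since $\alpha$ is of type T3, $p$ and $q$ lie in the same cell of $(\Gamma,\alpha)$, so any $pq$-arc crosses $\alpha$ an even number of times; but a \emph{valid} $pq$-arc crosses the colour of $\alpha$ at most once, hence crosses $\alpha$ zero times. Thus a valid $pq$-arc in $(\Gamma,\mathcal{J},p,q)$ never enters $R$ at all and is already a valid $pq$-arc in $(\Gamma',\mathcal{J}',p,q)$. No rerouting is required.

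Your pushing-out-of-$R$ argument is therefore addressing a situation that cannot arise: there are no ``excursions of $\gamma$ into $R$'' to reroute. This does not make your plan incorrect (the rerouting step is vacuous), but as written it is incomplete, since you only analyse excursions that do \emph{not} meet $\beta$ and never return to the case where an excursion does meet $\beta$; you would also need to worry about $R$ failing to be simply connected when the second hole lies on the $R$-side. All of this evaporates once you use the parity observation above.
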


\begin{proof}
	To see that the first part holds, consider an arc $\beta\in \mathcal{J}\setminus \{\alpha\}$ with $\beta \cap \Gamma' \neq \emptyset$. 
	Let $F$ be the cell of $(\Gamma, \alpha)$ disjoint from $p$ and $q$. 
	If $\alpha\cap \beta=\emptyset$, then $\beta\cap \Gamma'=\beta'$. 
	Thus, the remaining case is that $\alpha$ and $\beta$ cross, 
	and because they can only cross once, 
	$\beta\setminus \alpha$ has two components: 
	one is included in $F$, while the other is in $\Gamma\setminus F=\Gamma'$ and its closure is $\beta'$. 
	Moreover,
	since $p$ and $q$ are not in $F$, 
	$p$ and $q$ belong to the same cell of $(\Gamma,\beta)$ if and only if they belong to the same cell of $(\Gamma',\beta')$. 
	Also $\beta$ has its ends on different holes if and only $\beta'$ has its ends on different holes of $\Gamma'$. 
	Now the second part of the lemma follows from the fact that any valid $pq$-arc in $(\Gamma,\mathcal{J})$ does not cross $\alpha$, since $p$ and $q$ are in the same cell of $(\Gamma, \alpha)$.
\end{proof}

The operation of enlarging a hole along an arc allows us to eliminate all arcs of type T3. 
Thus, if our instance has only one hole, then we can transform it to one where there are only arcs of type T2.
If there are two arcs of type T2 of the same color, then it is clear that there cannot be a solution.
The following result shows that this condition is also sufficient for instances with only one hole.

\begin{lemma}
	\label{lemma:one_hole}
	Let $(\Gamma,\mathcal{J},p,q)$ be an instance of the path problem with one hole 
	that meets properties (i)--(iii).   
	Then a valid $pq$-arc exists if and only if there are no two $pq$-separating arcs of the same color.
\end{lemma}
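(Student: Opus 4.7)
The plan is to prove both directions of the equivalence. Necessity (only if) is immediate: if $\alpha, \beta \in \mathcal{J}$ are two $pq$-separating arcs of the same color, then any $pq$-arc crosses each of them at least once, so it uses that color at least twice, which is forbidden.

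For sufficiency, my plan is first to normalize the instance and then to apply a minimum-crossing argument. I would repeatedly invoke Lemma~\ref{lem:enlarging_hole}, enlarging the hole along each type-T3 arc until no T3 arc remains. Since $\Gamma$ has only one hole, no arc can be of type T1, so the resulting instance $(\Gamma', \mathcal{J}', p, q)$ has only T2 arcs. The lemma guarantees that arc types are preserved and $|\mathcal{J}'| \leq |\mathcal{J}|$, so the hypothesis ``no two $pq$-separating arcs of the same color'' persists. Thus every arc of $\mathcal{J}'$ is $pq$-separating and the colors are pairwise distinct, and it suffices to construct a $pq$-arc in $(\Gamma', \mathcal{J}')$ that crosses each arc exactly once.

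I would then take a $pq$-arc $\gamma$, in general position with $\mathcal{J}'$, that minimizes the total crossing count $\sum_{\alpha \in \mathcal{J}'} |\gamma \cap \alpha|$, and argue that $\gamma$ crosses each arc exactly once. Each count $|\gamma \cap \alpha|$ is odd because $\alpha$ is $pq$-separating. Suppose some $\alpha$ were crossed at least three times. Since the global minimum of $|\gamma \cap \alpha|$ over $pq$-arcs is $1$ (use that, in the one-hole case, $\Gamma'$ is topologically a disk so all $pq$-arcs are homotopic rel endpoints), the arcs $\gamma$ and $\alpha$ are not in minimal position, and the standard bigon criterion yields an innermost bigon $B$ bounded by subarcs $\gamma_B \subset \gamma$ and $\alpha_B \subset \alpha$, meeting at two corners $x, x'$, with the interior of $B$ disjoint from $\gamma \cup \alpha$. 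I would then replace $\gamma_B$ by a small pushoff $\delta$ of $\alpha_B$ into a tubular neighborhood on the side of $\alpha$ opposite $\gamma_B$, producing a new $pq$-arc $\gamma'$.

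The main obstacle is showing this modification strictly decreases the total crossing count. The crossings with $\alpha$ drop by two (at $x$ and $x'$) while $\delta$ contributes none, giving a net change of $-2$ for $\alpha$ itself. For any other $\beta \in \mathcal{J}'$, property (i) forces $|\beta \cap \alpha| \leq 1$, so $\delta$ picks up at most one crossing with $\beta$, and picks up exactly one only when $\beta \cap \alpha \in \alpha_B$. Using property (iii), the endpoints of $\beta$ lie on $\partial \Gamma'$ and hence outside $B$, so $|\beta \cap \partial B|$ is even; together with $|\beta \cap \alpha_B| \in \{0,1\}$ this pins down the parity of $|\beta \cap \gamma_B|$ to match the pushoff gain, making $|\beta \cap \gamma_B|$ at least as large as the number of new crossings of $\delta$ with $\beta$. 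Summing these non-positive contributions with the $-2$ from $\alpha$, the total crossing count strictly decreases, contradicting the minimality of $\gamma$. Hence $\gamma$ crosses each arc exactly once and is the desired valid $pq$-arc.
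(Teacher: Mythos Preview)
Your necessity direction and the normalization via repeated enlarging are fine, and your crossing-count bookkeeping after a bigon swap is correct. The gap is the sentence ``in the one-hole case, $\Gamma'$ is topologically a disk so all $pq$-arcs are homotopic rel endpoints.'' By the paper's definition, a plane with one hole is $\mathbb{R}^2$ with an open disk removed; this deformation-retracts to a circle and is not simply connected. Hence $pq$-arcs fall into infinitely many homotopy classes (distinguished by winding around the hole), and the standard bigon criterion only tells you that a bigon-free $\gamma$ realizes the minimum intersection with $\alpha$ \emph{within its own homotopy class}. It does not, as stated, rule out a minimizer of the total crossing count that meets some particular $\alpha$ three times without a $\gamma$--$\alpha$ bigon.

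The conclusion you need---that a bigon exists whenever $|\gamma\cap\alpha|\ge 3$---is actually true, but it requires a direct argument exploiting the special shape of $\alpha$. Because $\alpha$ has both ends on the single boundary circle, $\Gamma'\setminus\alpha$ has two components and exactly one of them, say $D$, is a topological disk; one of $p,q$ lies in $D$. The subarcs of $\gamma$ contained in $D$, other than the one carrying that terminal, are pairwise-disjoint chords of $D$ with endpoints on $\alpha$; an innermost such chord (with respect to the terminal) cuts off a sub-disk of $D$ whose $\alpha$-boundary contains no other crossing, and that sub-disk is the bigon you want. With this in hand your swap argument goes through unchanged. The paper sidesteps all of this with a short induction: it picks a segment of some $\alpha$ bounding the cell of $p$, steps across it to a neighboring point $p'$, observes that $\alpha$ is no longer $p'q$-separating, and applies the enlarging lemma to reduce $|\mathcal{J}|$.
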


\begin{proof}
	Suppose that $\mathcal{J}$ has at most one $pq$-separating arc of each color.
	To show that there is a valid $pq$-arc, we proceed by induction on $|\mathcal{J}|$.  The base case $|\mathcal{J}|=0$ clearly holds. Henceforth, we assume  $|\mathcal{J}|\geq 1$. 
	
	If an arc in $\mathcal{J}$ is not $pq$-separating, then
	we apply Lemma \ref{lem:enlarging_hole} to reduce $(\Gamma,\mathcal{J},p,q)$ into an instance $(\Gamma',\mathcal{J}',p,q)$ with fewer arcs and satisfying the same conditions as $(\Gamma,\mathcal{J},p,q)$. 
	The induction hypothesis implies the existence of valid $pq$-arc in $(\Gamma',\mathcal{J}',p,q)$, that, by Lemma~\ref{lem:enlarging_hole}, also implies the existence of a valid one for $(\Gamma,\mathcal{J},p,q)$. 
	
	Suppose now that every arc $\mathcal{J}$ is $pq$-separating. 
	Since $|\mathcal{J}|\geq 1$, $p$ and $q$ are in different cells of $(\Gamma,\mathcal{J})$.  
	Let $F_p$ be the cell containing $p$ and let $\alpha\in\mathcal{J}$ be an arc with a segment $\sigma$ on the boundary of $F_p$. 
	Consider a point  $p'$ in the other cell of $(\Gamma,\mathcal{J})$ having $\sigma$ 
	on its boundary. 
	
	With the exception of $\alpha$, all the arcs in $\mathcal{J}$  are $p'q$-separating. 
	From the preceding discussion it follows that a valid $p'q$-arc not intersecting $\alpha$ exists.  
	No $p'q$-separating arc has the same color as $\alpha$ and therefore, 
	we can extend this valid $p'q$-arc to a valid $pq$-arc.
\end{proof}

With Lemma \ref{lemma:one_hole} in hand, we can now focus on instances with two holes. 
In this context, the condition of not having two $pq$-separating 
arcs of the same color is not sufficient to imply the existence of a valid $pq$-arc, as Fig.~\ref{fig:2holes} (a) shows. 
However, 
using the following operation, 
we can transform an instance with two holes into an instance with only one hole
when there is an arc of type T1 that cannot be crossed by a valid arc.

\paragraph{Cutting through an arc.} 
Let $\alpha\in\mathcal{J}$ be an arc of type T1 having its ends on distinct holes.  
The transformed instance $(\Gamma',\mathcal{J}',p,q)$ obtained from $(\Gamma,\mathcal{J},p,q)$ by {\em cutting through $\alpha$} is defined as follows.
Consider a thin open strip $\Sigma$ in $\Gamma$ covering $\alpha$ and neither containing $p$ nor $q$. 
Then, $\Gamma'=\Gamma\setminus \Sigma$ (this merges the two holes of $\Gamma$ into one hole) 
and $\mathcal{J}' = \mathcal{J}\cap \Gamma'$; 
see Fig.~\ref{fig:two_opA} (b) for an illustration.

\setcounter{figure}{7} 
\begin{figure}[tb]
	\centering
	\includegraphics[]{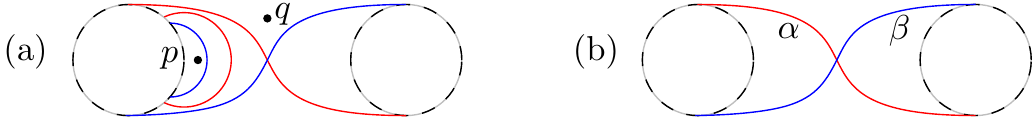}
	\caption{Relevant figures for instances with two holes.}
	\label{fig:2holes}
\end{figure}

\begin{observation}\label{obs:cutting}
	Let $(\Gamma,\mathcal{J},p,q)$ be an instance of the path problem with two holes  
	that meets properties (i)--(iii), 
	and let $(\Gamma',\mathcal{J}',p,q)$ be the instance obtained from $(\Gamma,\mathcal{J},p,q)$ by 
	cutting through an arc $\alpha\in \mathcal{J}$ of type T1. 
	Then, there is a valid $pq$-arc in $(\Gamma,\mathcal{J},p,q)$ 
	not crossing $\alpha$ if and only if there is a valid $pq$-arc in $(\Gamma',\mathcal{J}',p,q)$. 
\end{observation}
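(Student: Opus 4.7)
The plan is to prove the two implications separately, in both cases exploiting the fact that the strip $\Sigma$ used in the cutting operation can be taken arbitrarily thin, and that when a strip is thin enough it only cuts an arc $\beta\in \mathcal{J}\setminus\{\alpha\}$ if $\beta$ crosses $\alpha$, in which case $\beta$ splits into exactly two pieces in $\Gamma'$, both inheriting the color of $\beta$. The one subtlety is that the arcs of $\mathcal{J}'$ are not literally the arcs of $\mathcal{J}$, so I need to track crossings carefully per color.

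For the forward direction, I start from a valid $pq$-arc $\gamma$ in $(\Gamma,\mathcal{J},p,q)$ with $\gamma\cap\alpha=\emptyset$. Since $\alpha$ and $\gamma$ are disjoint compact subsets of $\Gamma$, there is an open neighborhood of $\alpha$ disjoint from $\gamma\cup\{p,q\}$; I choose $\Sigma$ inside this neighborhood, which is legitimate because the cutting operation only requires $\Sigma$ to be a thin strip covering $\alpha$ and avoiding $p,q$. Then $\gamma\subseteq \Gamma\setminus\Sigma=\Gamma'$, so $\gamma$ is a $pq$-arc in the new instance. For each color $c$, since $\gamma$ avoids $\Sigma$, every crossing of $\gamma$ with an arc $\beta\in\mathcal{J}\setminus\{\alpha\}$ lies outside $\Sigma$ and therefore corresponds to a crossing of $\gamma$ with exactly one piece of $\beta\cap\Gamma'\in\mathcal{J}'$. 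Hence the number of arcs of color $c$ in $\mathcal{J}'$ crossed by $\gamma$ equals the number of arcs of color $c$ in $\mathcal{J}\setminus\{\alpha\}$ crossed by $\gamma$, which is at most one by the validity of $\gamma$.

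For the backward direction, I start from a valid $pq$-arc $\gamma'$ in $(\Gamma',\mathcal{J}',p,q)$. Then $\gamma'\subseteq \Gamma'\subseteq \Gamma$, so $\gamma'$ is already a $pq$-arc in $\Gamma$, and since $\alpha\subseteq\Sigma$ while $\gamma'\subseteq\Gamma\setminus\Sigma$, we have $\gamma'\cap\alpha=\emptyset$. It remains to verify validity with respect to $\mathcal{J}$. Suppose, toward a contradiction, that $\gamma'$ crosses two distinct arcs $\beta_1,\beta_2\in\mathcal{J}$ of the same color $c$; neither of them is $\alpha$ since $\gamma'$ avoids $\alpha$. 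Each crossing point lies in $\Gamma'$, so it is a crossing of $\gamma'$ with a piece of $\beta_i\cap\Gamma'$ in $\mathcal{J}'$; these two pieces are distinct arcs of $\mathcal{J}'$ of color $c$ because $\beta_1$ and $\beta_2$ are disjoint by property (ii). This contradicts validity of $\gamma'$ in $(\Gamma',\mathcal{J}')$.

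The main obstacle is not conceptual but bookkeeping: one must check that ``crossing at most one arc of each color'' behaves correctly under the splitting of arcs caused by removing $\Sigma$, and that the strip can indeed be chosen thin enough in the forward direction. Both are handled by the disjointness of same-colored arcs (property (ii)) together with the freedom in choosing $\Sigma$, so I expect the proof to be short.
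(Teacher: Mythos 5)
The paper states this as an observation and supplies no proof at all, so there is no argument of its own to compare yours against; your verification is correct and complete, and it is exactly the routine check the authors are implicitly invoking. The one point that deserves a remark is your re-choosing of the strip $\Sigma$ in the forward direction: the instance $(\Gamma',\mathcal{J}',p,q)$ in the statement is obtained from some fixed strip, so strictly speaking you prove the claim for a (possibly different) strip adapted to $\gamma$ and then rely on the true but unstated fact that any two admissible thin strips yield homeomorphic instances. An equivalent and slightly cleaner route is to keep $\Sigma$ fixed and isotope the valid arc $\gamma$ off $\overline{\Sigma}$, which is possible precisely because $\gamma$ is disjoint from the compact arc $\alpha$ and such an isotopy can be chosen so as not to change the set of crossing points of $\gamma$ with $\mathcal{J}\setminus\{\alpha\}$. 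With either formulation, your per-color bookkeeping — each surviving piece inherits its parent's color, distinct same-colored arcs are disjoint by property (ii), and all relevant crossing points lie in $\Gamma'$ — is exactly what is needed in both directions, so the observation is fully justified.
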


Suppose that $\alpha, \beta \in \mathcal{J}$ are two crossing arcs of type T1. 
Then,  %$\Gamma\cup\alpha\cup\beta$
$(\Gamma,\{\alpha,\beta\})$
has exactly three cells; see Fig.~\ref{fig:2holes} (b). 
Moreover, if the terminals $p$ and $q$ are located in the pair of non-adjacent cells, 
then any valid $pq$-arc is forced to cross both $\alpha$ and $\beta$. 
The next result shows that if, for an arc $\alpha$ of type T1, 
there is no arc $\beta$ of type T1 producing this situation
and all arcs are of different colors, 
then there is a valid $pq$-arc not crossing $\alpha$.

\begin{lemma}
	\label{lemma:extend_avoid}
	Let $(\Gamma,\mathcal{J},p,q)$ be an instance of the path problem with two holes  
	that meets properties (i)--(iii). 
	%and let $p, q\in \Gamma$ be the terminals.
	Suppose that every arc in $\mathcal{J}$ is either of type T1 or of type T2 and 
	that all the arcs in $\mathcal{J}$ are of different colors.
	Let $\alpha\in\mathcal{J}$ be any arc of type T1.
	If, for every type T1 arc $\beta\in \mathcal{J}\setminus\{\alpha\}$ crossing $\alpha$,  
	$p$ and $q$ are in adjacent cells of $(\Gamma,\{\alpha, \beta\})$, 
	then there is a valid $pq$-arc not intersecting $\alpha$.  
\end{lemma}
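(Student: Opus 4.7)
My plan is to cut through $\alpha$ and apply the one-hole case, Lemma~\ref{lemma:one_hole}. By Observation~\ref{obs:cutting}, the existence of a valid $pq$-arc in $(\Gamma,\mathcal{J},p,q)$ that avoids $\alpha$ is equivalent to the existence of a valid $pq$-arc in the one-hole instance $(\Gamma',\mathcal{J}',p,q)$ obtained by cutting through $\alpha$. Since $(\Gamma',\mathcal{J}',p,q)$ inherits properties (i)--(iii), Lemma~\ref{lemma:one_hole} reduces the task to showing that no two $pq$-separating arcs in $\mathcal{J}'$ share a colour.

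All arcs of $\mathcal{J}$ have distinct colours, so two same-coloured arcs in $\mathcal{J}'$ can only arise from an arc $\beta\in\mathcal{J}\setminus\{\alpha\}$ that crosses $\alpha$; by property (i), $\beta$ crosses $\alpha$ exactly once and therefore splits into two non-trivial pieces $\beta_L,\beta_R$ in $\Gamma'$. The heart of the proof is the claim that for every such $\beta$, at most one of $\beta_L,\beta_R$ is $pq$-separating in $\Gamma'$.

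To prove this claim I analyse the cells of $(\Gamma,\{\alpha,\beta\})$. Contracting each hole to a vertex, Euler's formula applied to the resulting planar multigraph shows that $(\Gamma,\{\alpha,\beta\})$ has exactly three cells: a single ``large'' cell and two ``small'' cells $R_1,R_2$, each bounded by a piece of $\alpha$, a piece of $\beta$, and an arc of a hole boundary. After cutting $\alpha$, the piece $\beta_L$ together with an arc of the merged hole boundary encloses a disk equal to $R_1\cap\Gamma'$, and analogously $\beta_R$ encloses $R_2\cap\Gamma'$. Because the strip $\Sigma$ contains neither $p$ nor $q$, each $\beta_i$ is $pq$-separating in $\Gamma'$ exactly when one of $p,q$ lies in $R_i$. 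Hence both pieces being $pq$-separating forces one of $p,q$ into $R_1$ and the other into $R_2$.

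I rule this out by cases on the type of $\beta$. If $\beta$ is of type T1, then $R_1$ and $R_2$ are the two cells sitting against the two different holes, and they form the unique non-adjacent pair among the three cells of $(\Gamma,\{\alpha,\beta\})$; placing $p$ and $q$ one in each contradicts the hypothesis that $p,q$ lie in adjacent cells of $(\Gamma,\{\alpha,\beta\})$. If $\beta$ is of type T2, both endpoints of $\beta$ lie on a common hole and both $R_1,R_2$ sit inside the ``pocket'' bounded by $\beta$ and an arc of that hole; since $R_1$ and $R_2$ are separated from each other only by a piece of $\alpha$ (not by $\beta$), they lie on the same side of $\beta$ in $\Gamma\setminus\beta$, so placing $p$ and $q$ one in each contradicts $\beta$ being $pq$-separating. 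The step I expect to be the main obstacle is the T2 case, where one must argue carefully, using the topology of $\Gamma$ as a pair of pants, that the two small cells indeed lie on the same component of $\Gamma\setminus\beta$.
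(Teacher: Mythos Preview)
Your proposal is correct and follows essentially the same approach as the paper's proof: cut through $\alpha$, invoke Lemma~\ref{lemma:one_hole}, and rule out two same-colour $pq$-separating pieces by a T1/T2 case analysis on the arc $\beta$ that produced them. The only difference is presentational---you analyse the three cells of $(\Gamma,\{\alpha,\beta\})$ before cutting and then transfer the information to $\Gamma'$, whereas the paper works directly with the three cells of $(\Gamma',\{\beta_1,\beta_2\})$ after cutting; the underlying topological argument (in the T2 case, that the two relevant cells lie in the same component of $\Gamma\setminus\beta$, and in the T1 case, that they are the non-adjacent pair) is identical.
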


\begin{proof}
	Let $(\Gamma',\mathcal{J}',p,q)$ be the instance obtained from cutting $\Gamma$ along $\alpha$.  
	Let $h$ be the hole of $\Gamma'$ obtained from merging the two holes $h_1$ and $h_2$ of $\Gamma$ with a thin strip covering $\alpha$. 
	We decompose the boundary of $h$ as the union of four arcs $\alpha_1$, $\gamma_1$, $\alpha_2$ and $\gamma_2$, 
	where $\alpha_1$ and $\alpha_2$ bound the strip covering $\alpha$ and, 
	for $i=1,2$, $\gamma_i$ is the arc on the boundary of $h_i$ connecting $\alpha_1$ and $\alpha_2$. 
	
	From Observation~\ref{obs:cutting}, it is enough to show the existence of a valid $pq$-arc in $(\Gamma',\mathcal{J}',p,q)$. 
	Assume for contradiction that there is no valid $pq$-arc for $(\Gamma',\mathcal{J}',p,q)$.  
	Lemma \ref{lemma:one_hole} shows that then there are two separating $pq$-arcs  $\beta_1$, $\beta_2\in\mathcal{J}'$ of the same color.
	Since all arcs in $\mathcal{J}$ are of different colors, 
	if $\beta\in \mathcal{J}\setminus \{\alpha\}$, then the arc components (one or two, depending on whether $\alpha$ and $\beta$ cross or not) of $\Gamma'\cap \beta$ induce one chromatic class of arcs in $\mathcal{J}'$. 
	Thus, there is an arc $\beta\in \mathcal{J}$ that crosses $\alpha$ and with two arc components $\beta_1$ and $\beta_2$ of $\Gamma'\cap \beta$.

	Since $\beta$ crosses $\alpha$, 
	each of $\beta_1$ and $\beta_2$ has exactly one endpoint on a different arc of $\alpha_1$ and $\alpha_2$. 
	By possibly relabeling $\beta_1$ and $\beta_2$, we may assume that, for $i=1,2$,  $\beta_i$ has an endpoint $a_i$ in $\alpha_i$.  
	For $i=1,2$, let $b_i$ be the endpoint of $\beta_i$ that is not $a_i$. 
	
	First, we suppose that both $b_1$ and $b_2$ are on the same hole of $\Gamma$, say  $h_1$ 
	(so $\beta$ is of type T2). 
	Then, $(\Gamma',\{\beta_1,\beta_2\})$ has three cells.
	As both $\beta_1$ and $\beta_2$ are $pq$-separating, 
	$p$ and $q$ are in the two cells of $(\Gamma',\{\beta_1,\beta_2\})$ that do not have $\gamma_2$ on the closure of their boundaries. 
	However, these two cells are included in the same cell of $(\Gamma, \beta)$, contradicting that $\beta$ is $pq$-separating (and thus of type T2). 
	
	Second, suppose that $b_1$ and $b_2$ are on different holes (so $\beta$ is of type T1).
	By symmetry, we may assume  $b_1\in h_1$ and $b_2\in h_2$. 
	There are three cells of $(\Gamma',\{\beta_1,\beta_2\})$, and, since $\beta_1$ and $\beta_2$ are $pq$-separating, 
	$p$ and $q$ are in the cells that have exactly one of $\beta_1$ and $\beta_2$ on their  boundary. 
	However, this implies that $p$ and $q$ are in non-adjacent cells of $(\Gamma,\{\alpha, \beta\})$, contradicting our hypothesis. 
\end{proof}

In fact, when all the arcs in $\mathcal{J}$ are of different colors there is always a valid $pq$-arc:

\begin{lemma}
	\label{lemma:colors}
	Let $(\Gamma,\mathcal{J},p,q)$ be an instance of the path problem with at most two holes that meets properties (i)--(iii).
	If all the arcs in $\mathcal{J}$ are of different colors, then there exists a valid $pq$-arc.
\end{lemma}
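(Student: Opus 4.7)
The plan is to reduce the problem to the cases already handled by Lemma~\ref{lemma:one_hole} and Lemma~\ref{lemma:extend_avoid}. First, I would iteratively apply Lemma~\ref{lem:enlarging_hole} to eliminate every arc of type T3 from $\mathcal{J}$; since this operation strictly decreases $|\mathcal{J}|$, preserves the type of every remaining arc, and preserves the distinctness of colors, after finitely many applications I may assume that every arc of $\mathcal{J}$ is of type T1 or T2.

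If the instance has a single hole, then no T1 arc can exist (T1 arcs require endpoints on two distinct holes), so every arc of $\mathcal{J}$ is of type T2. Since all colors are distinct, the hypothesis of Lemma~\ref{lemma:one_hole} that ``no two $pq$-separating arcs have the same color'' holds vacuously, and the lemma supplies a valid $pq$-arc.

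If the instance has two holes, I would try to apply Lemma~\ref{lemma:extend_avoid}. If some T1 arc $\alpha \in \mathcal{J}$ is not crossed by any other T1 arc, its adjacency hypothesis is vacuously satisfied, and Lemma~\ref{lemma:extend_avoid} directly yields a valid $pq$-arc not intersecting $\alpha$. If no T1 arc exists at all, I would exhibit a Jordan arc $\alpha_0$ from $h_1$ to $h_2$ inside a cell of $(\Gamma,\mathcal{J})$ touching both boundaries---such a cell must exist because no arc of $\mathcal{J}$ connects the two holes---and then cut along $\alpha_0$ to merge the two holes without creating new intersections, reducing to the one-hole case already established.

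I expect the main obstacle to be the remaining sub-case: every T1 arc is crossed by another T1 arc, and for some crossing pair $\alpha,\beta$ of T1 arcs the terminals $p,q$ lie in non-adjacent cells of $(\Gamma,\{\alpha,\beta\})$, so Lemma~\ref{lemma:extend_avoid}'s hypothesis fails. I would then cut through $\alpha$ via Observation~\ref{obs:cutting}. Although this splits $\beta$ into two same-color pieces in the resulting one-hole instance---so color-distinctness is lost---I expect that, by a careful cell analysis in the spirit of the proof of Lemma~\ref{lemma:extend_avoid}, one can show that the two $\beta$-pieces cannot both be $pq$-separating in the cut instance. Lemma~\ref{lemma:one_hole} would then produce a valid $pq$-arc there, which corresponds to a valid $pq$-arc in the original instance crossing $\alpha$ exactly once.
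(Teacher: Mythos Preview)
Your reductions in the easy cases are fine, but the final sub-case contains a genuine gap: the expectation that ``the two $\beta$-pieces cannot both be $pq$-separating in the cut instance'' is false. Indeed, the proof of Lemma~\ref{lemma:extend_avoid} establishes exactly the opposite implication: if $\beta$ is of type~T1, crosses $\alpha$, and its two pieces $\beta_1,\beta_2$ in $\text{CUT}((\Gamma,\mathcal{J},p,q),\alpha)$ are both $pq$-separating, then $p$ and $q$ lie in the non-adjacent cells of $(\Gamma,\{\alpha,\beta\})$. Read the other way, whenever you are in your hard sub-case the cut produces two same-colored $pq$-separating arcs, so Lemma~\ref{lemma:one_hole} gives you nothing. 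A concrete picture: take $h_1,h_2$ side by side, let $\alpha$ and $\beta$ be two T1 arcs crossing once at $\times$, with $p$ in the ``left'' wedge at $\times$ and $q$ in the ``right'' wedge. After cutting along $\alpha$, the piece $\beta_1$ bounds a small region containing $p$ (its boundary is $\beta_1$ together with arcs on $\partial H$), and symmetrically $\beta_2$ bounds a region containing $q$; both pieces separate.

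The paper sidesteps this by a different device. Instead of cutting and hoping, it runs an induction on $|\mathcal{J}|$ and, at the inductive step, picks an arc $\alpha$ with a segment $\sigma$ on the boundary of the cell $F_p$ containing $p$, and a point $p'$ just across $\sigma$. If $\alpha$ is of type~T2 it becomes non-separating for $(p',q)$ and the enlarging lemma finishes. If $\alpha$ is of type~T1, the key Claim is that either a valid $pq$-arc or a valid $p'q$-arc avoiding $\alpha$ exists. This is proved by applying Lemma~\ref{lemma:extend_avoid} twice, once with terminals $p,q$ and once with terminals $p',q$, obtaining two obstructing T1 arcs $\beta$ and $\beta'$; a cell-chasing argument then shows $F_q^{\beta}\cap F_q^{\beta'}=\emptyset$, contradicting $q\in F_q^{\beta}\cap F_q^{\beta'}$. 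In the second alternative the valid $p'q$-arc is extended across $\sigma$ to a valid $pq$-arc. The point you are missing is this ``shift $p$ to $p'$ and play the two obstructions against each other'' idea; a single application of Lemma~\ref{lemma:extend_avoid} followed by a cut does not suffice.
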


\begin{proof}
	If $\Gamma$ has only one hole, then the result follows from Lemma \ref{lemma:one_hole}, so we assume that $\Gamma$ has two holes. 
	We proceed by Induction on $|\mathcal{J}|$. 
	The base case $|\mathcal{J}|=1$ clearly holds. 
	Suppose $|\mathcal{J}|\geq 2$.
	
	If $\mathcal{J}$ has an arc of type T3, then
	we can apply Lemma~\ref{lem:enlarging_hole} to obtain an instance with fewer arcs that satisfies the same conditions as $(\Gamma,\mathcal{J},p,q)$. 
	The induction hypothesis shows that there is a valid $pq$-arc for the transformed instance, 
	and thus, there is also a valid $pq$-arc in $(\Gamma,\mathcal{J},p,q)$. 
	Henceforth, we assume that $\mathcal{J}$ has only arcs of types T1 and T2.
	
	Let $F_p$ be the cell of $(\Gamma,\mathcal{J})$ containing $p$ and let $\alpha\in \mathcal{J}$ be an arc having a segment $\sigma$ on the boundary of $F_p$. 
	Consider a point $p'$  in the cell adjacent to $F_p$ that has $\sigma$ on its boundary. 
	
	If $\alpha$ is of type T2 (with respect to terminals $p$ and $q$), then, as $\alpha$ is not $p'q$-separating, 
	applying Lemma~\ref{lem:enlarging_hole} as before shows that there is a valid $p'q$-arc (not crossing $\alpha$) 
	that can be extended to a valid $pq$-arc.
	
	Thus, the only remaining case is that $\alpha$  is of type T1, so we assume that $\alpha$ has its ends on two different holes $h_1$ and $h_2$.
	
	\begin{claim}
		Either there is a valid $pq$-arc not intersecting $\alpha$ or there is a valid $p'q$-arc not intersecting $\alpha$. 
	\end{claim}
	\begin{proof}
		Assume for contradiction that there are no valid $pq$- and $p'q$-arcs disjoint from $\alpha$. 
		Lemma \ref{lemma:extend_avoid} implies that there is a type T1 arc $\beta\in \mathcal{J}\setminus \{\alpha\}$ crossing $\alpha$,  
		such that the two non-adjacent cells $F_p^\beta$ and $F_q^\beta$ of $(\Gamma, \{\alpha,\beta\})$ contain $p$ and $q$, respectively. 
		Let $F^\beta$ be the other cell of $(\Gamma, \{\alpha,\beta\})$ neither including $p$ nor $q$. %(and adjacent to both holes $h_1$, $h_2$).  
		Likewise, there exists $\beta'\in \mathcal{J}\setminus \{\alpha\}$ crossing $\alpha$, 
		such that the two non-adjacent cells $F_{p'}^{\beta'}$ and $F_q^{\beta'}$ of $(\Gamma, \{\alpha,\beta'\})$ contain $p'$ and $q$, respectively. 
		Let $F^{\beta'}$ be the other cell of $(\Gamma, \{\alpha,\beta'\})$.
		 
		Let $\times$ and $\times'$ be the crossings between $\alpha$ and $\beta$ and between $\alpha$ and $\beta'$, respectively.
		By symmetry, we may assume that when we traverse $\alpha$ from $h_1$ to $h_2$, we encounter $\times$ before $\times'$. 
		Also, by possibly relabeling $h_1$ and $h_2$, we may assume that $F_p^\beta$ has a subarc of the boundary of $h_1$ on its boundary, 
		while $F_q^\beta$ has a subarc of the boundary of $h_2$ on its boundary. 
		
		Since $p\in F_p$ and $F_p\subseteq F_p^\beta$, the segment $\sigma\subseteq \alpha$ shared by $F_p$ and $F_{p'}$ is located on $\alpha$ between the endpoint of $\alpha$ in $h_1$ and $\times$. 
		As $p'\in F_{p'}$,  both $p'$ and $F_{p'}$ are contained in  $F^\beta$. 
		
		The boundary of $F_{p'}^{\beta'}$ is a simple closed curve $C$  made of three arcs: 
		The first one connects $\times'$ to the boundary of $h_1$ along $\alpha$; 
		the second one 
		is an subarc of the boundary of $h_1$ connecting the endpoint of $\alpha$ on $h_1$ to the endpoint of $\beta'$ on $h_1$; 
		and the third one 
		is a subarc of $\beta'$ connecting the endpoint of $\beta'$ on $h_1$ to $\times'$. 
		Since $F_{p'}^{\beta'}$ contains $F_{p'}$, the points on $C\cap \beta'$ near $\times'$ are on the side of $\alpha$ that contains points %on the boundary of 
		in $F_{p'}$. 
		As $\times'$ comes after $\times$ when we traverse $\alpha$ from $h_1$ to $h_2$, the points on $C\cap \beta'$ near $\times'$ are %on the boundary of 
		in $F_q^\beta$. 
		Since the endpoint of $C\cap \beta'$ on $h_1$ is not in $F_q^\beta$, 
		the arc $C\cap \beta'$ crosses $\beta$ at some point $\times_{\beta,\beta'}$. 
		
		Since $\times_{\beta,\beta'}$ is the only crossing between $\beta$ and $\beta'$, the subarc of $\beta'$ from $\times'$ to $h_2$ is disjoint from $\beta$. 
		The points on this subarc near $\times'$ are %on the boundary of 
		in $F^\beta$, and thus, the cell $F_q^{\beta'}$ is included in  $F^\beta$. 
		However, this shows that $F_q^\beta\cap F_q^{\beta'}=\emptyset$, contradicting that $q\in F_q^\beta\cap F_q^{\beta'}$. 
	\end{proof}
	
	From the previous claim, either there is a valid $pq$-arc not crossing $\alpha$ or there is a valid $p'q$-arc not crossing $\alpha$. 
	In the former case we are done. 
	In the later, we extend the valid $p'q$-arc to a valid $pq$-arc by crossing $\sigma$.
\end{proof}

With all the previous results 
we can now show the polynomial-time algorithm that proves Theorem \ref{thm:alg_one_extension}.
From Observation \ref{obs:reduction}, it is enough to solve the path problem with at most two holes for instances meeting properties (i)--(iii) 
in polynomial time. 
To show this we consider Algorithm~\ref{alg:dpp}. 

In Algorithm~\ref{alg:dpp}, $(\Gamma,\mathcal{J},p,q)$ is an instance of the path problem with at most two holes meeting properties (i)--(iii).  
$\text{ENLARGE}((\Gamma,\mathcal{J},p,q),\alpha)$ is a shorthand  for the instance obtained from $(\Gamma,\mathcal{J}, p, q)$ by enlarging a hole of $\Gamma$ along $\alpha$ and $\text{CUT}((\Gamma,\mathcal{J},p,q),\alpha)$ is a shorthand for the instance obtained from $(\Gamma,\mathcal{J}, p, q)$ by cutting through $\alpha$.   
We now show the correctness of Algorithm~\ref{alg:dpp}.

\begin{algorithm}[tb]
	\begin{algorithmic}[1]
		\caption{PPH$(\Gamma,\mathcal{J},p,q)$: outputs whether there is a valid $pq$-arc.}
		\label{alg:dpp}
		
		\WHILE{ $\mathcal{J}\neq \emptyset$} \label{stp:trivial}
		\IF{$\mathcal{J}$ has an arc $\alpha$ of type T3 (has its ends on same hole and is not $pq$-separating)} \label{stp:has_T3}
		\STATE $(\Gamma,\mathcal{J},p,q)\leftarrow \text{ENLARGE}((\Gamma,\mathcal{J},p,q),\alpha)$
		\ELSE \label{stp:no_T3}
		\IF{all arcs in $\mathcal{J}$ are of different colors} \label{stp:if_colors}
		\RETURN \texttt{True} \label{stp:colors_true}
		\ELSE
		\STATE find two arcs $\alpha$ and $\alpha'\in\mathcal{J}$ of the same color 
		\IF{both $\alpha$ and $\alpha'$ are of type T2 ($pq$-separating)} \label{stp:both_sep}
		\RETURN \texttt{False} \label{stp:both_sep_true}
		\ELSIF{$\alpha$ is of type T2 ($pq$-separating) and $\alpha'$ is of type T1 (has its ends on two holes)} \label{stp:mixed}
		\RETURN  $\text{PPH}( \text{CUT}((\Gamma,\mathcal{J},p,q),\alpha')))$ \label{stp:mixed_reduction}
		\ELSIF{both $\alpha$ and $\alpha'$ are of type T1 (have their ends on two holes)} \label{stp:both_T1}
		\RETURN $\text{PPH}(\text{CUT}((\Gamma,\mathcal{J},p,q),\alpha)) \vee  \text{PPH}(\text{CUT}((\Gamma,\mathcal{J},p,q),\alpha'))$ \label{stp:both_T1_reduction}
		\ENDIF
		\ENDIF
		\ENDIF
		\ENDWHILE
		\RETURN \texttt{True}
	\end{algorithmic}
\end{algorithm}

\begin{theorem}
	Let $(\Gamma,\mathcal{J},p,q)$ be an instance of the path problem with at most two holes that meets properties (i)--(iii). 
	Then, Algorithm~\ref{alg:dpp} decides whether there is a valid $pq$-arc 
	in polynomial time in the number of arcs in $\mathcal{J}$.
\end{theorem}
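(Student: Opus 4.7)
The plan is to prove correctness by strong induction on $|\mathcal{J}|$ combined with a case analysis on the branch of Algorithm~\ref{alg:dpp} executed, and to bound the running time separately. The while loop maintains the invariant that $(\Gamma,\mathcal{J},p,q)$ is an instance satisfying (i)--(iii) and equivalent to the input instance, so proving that each iteration either outputs a correct decision or passes to an equivalent smaller instance will suffice. The trivial termination $\mathcal{J}=\emptyset$ is handled because the empty set of arcs admits any $pq$-arc in $\Gamma$.

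I would first dispatch the easy branches. Lines 2--3 rely on Lemma~\ref{lem:enlarging_hole}: enlarging along a T3 arc produces an equivalent instance with strictly fewer arcs that still satisfies (i)--(iii). Lines 5--6 are correct by Lemma~\ref{lemma:colors}, which guarantees a valid $pq$-arc whenever all colors are distinct. Lines 9--10 correctly return \texttt{False}, since two same-colored type T2 arcs are both $pq$-separating; every $pq$-arc is then forced to cross each of them, exceeding the color budget.

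The two recursive branches are where the argument becomes substantive. In lines 11--12, the T2 arc $\alpha$ and the T1 arc $\alpha'$ share a color; because $\alpha$ is $pq$-separating, every valid $pq$-arc uses its single allowance of this color on $\alpha$ and therefore does not cross $\alpha'$, while conversely any valid $pq$-arc avoiding $\alpha'$ is already a solution. By Observation~\ref{obs:cutting}, this is equivalent to the existence of a valid $pq$-arc in the instance obtained by cutting through $\alpha'$. In lines 13--14, $\alpha$ and $\alpha'$ are both T1 arcs of the same color; since any valid $pq$-arc crosses at most one arc per color, a solution exists if and only if one avoids $\alpha$ or one avoids $\alpha'$, and Observation~\ref{obs:cutting} converts each disjunct into the corresponding cut instance. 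In both recursive branches, the CUT operation merges the two holes into a single one, so the recursive subproblem has only one hole and hence no T1 arcs; lines 11--14 cannot fire in the recursive call, so the recursion has depth at most one.

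For the running time, within a single invocation of PPH the while loop iterates at most $|\mathcal{J}|$ times before returning or recursing, because any iteration executing lines 2--3 strictly decreases $|\mathcal{J}|$ and every other iteration exits the loop. Each iteration performs only polynomial work to classify arcs by type, compare colors, and apply ENLARGE or CUT. The top-level call spawns at most two recursive calls (line 14), each on a one-hole instance that does not recurse further, yielding polynomial total running time in $|\mathcal{J}|$. The principal obstacle I anticipate is the routine but careful verification that the CUT operation preserves properties (i)--(iii) for the recursive subinstance---in particular, that splitting an arc into two components does not introduce same-colored crossings and that each resulting arc still has its ends on the boundary of the merged hole---so that Lemma~\ref{lemma:colors} and Observation~\ref{obs:cutting} genuinely apply inside the recursive call.
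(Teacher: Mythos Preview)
Your proposal is correct and follows essentially the same approach as the paper's own proof: both arguments dispatch the T3 case via Lemma~\ref{lem:enlarging_hole}, the all-distinct-colors case via Lemma~\ref{lemma:colors}, the two-T2 case as an obvious \texttt{False}, and the remaining recursive cases via Observation~\ref{obs:cutting}, with the key termination observation that CUT always yields a one-hole instance so the recursion has depth at most one. Your write-up is if anything slightly more careful than the paper's---you make the loop/recursion bound explicit and you flag the verification that CUT preserves properties (i)--(iii), which the paper leaves implicit.
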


\begin{proof}
	Step \ref{stp:trivial}
	primarily checks if our current instance $(\Gamma,\mathcal{J},p,q)$ is trivial (i.e. $\mathcal{J}=\emptyset)$. 
	If not, the algorithm moves towards Step \ref{stp:has_T3}, where it verifies if $\mathcal{J}$ has a type T3 arc.  
	If it has one, it uses this arc to enlarge a hole and applies Lemma~\ref{lem:enlarging_hole} to update our instance to one with fewer arcs. 
	
	Otherwise, if $\mathcal{J}$ has no arcs of type T3, 
	the process continues with Step \ref{stp:no_T3}. 
	The first possibility is that all arcs in $\mathcal{J}$ are of different colors, and in this case the conditions of Lemma \ref{lemma:colors} apply, so there is a valid $pq$-arc (Steps \ref{stp:if_colors}--\ref{stp:colors_true}). 
	
	The second possibility is that $\mathcal{J}$ has two arcs $\alpha$ and $\alpha'$ of  the same color.
	If both $\alpha$ and $\alpha'$ are $pq$-separating, then clearly no valid $pq$-arc exists (Steps \ref{stp:both_sep}--\ref{stp:both_sep_true}). Otherwise,  one of them, say $\alpha'$, is of type T1. If $\alpha$ is $pq$-separating (type T2), then any valid $pq$-arc must cross $\alpha$, and thus, it does not cross $\alpha'$. 
	Therefore, it is enough to look for a valid $pq$-arc not crossing $\alpha'$. 
	Observation \ref{obs:cutting} translates that into finding a valid $pq$-arc for the instance with one hole that we obtain with the operation $\text{CUT}((\Gamma,\mathcal{J},p,q),\alpha')$ (Steps \ref{stp:mixed}--\ref{stp:mixed_reduction}).
	
	The third and last alternative is that both $\alpha$ and $\alpha'$ are of type T1. 
	In this case, any valid $pq$-arc crosses only one of $\alpha$ and $\alpha'$. 
	Thus, by Observation \ref{obs:cutting}, it is enough verify both instances obtained by applying the transformations $\text{CUT}((\Gamma,\mathcal{J},p,q),\alpha)$ and $\text{CUT}((\Gamma,\mathcal{J},p,q),\alpha')$ (Steps \ref{stp:both_T1}--\ref{stp:both_T1_reduction}). 
	An attentive reader may notice how, in principle,  an iterative occurrence of Step \ref{stp:both_T1_reduction} may
	lead into an exponential blow-up of the running time. 
	However, the fact that both instances that we obtain applying the transformations  $\text{CUT}((\Gamma,\mathcal{J},p,q),\alpha)$ and $\text{CUT}((\Gamma,\mathcal{J},p,q),\alpha')$ are instances of the path problem with one hole, guarantees that the algorithm goes through Step \ref{stp:both_T1_reduction} at most once.
\end{proof}

\end{document}